\documentclass[12pt]{article}  

\usepackage{amssymb}
\usepackage{amsmath}
\usepackage{amsthm}
\usepackage[usenames]{color}
\usepackage{colortbl}
\usepackage{graphicx}
\usepackage{hyperref}
\usepackage{fullpage}

\usepackage{lmodern}
\usepackage[T1]{fontenc}
\usepackage[utf8]{inputenc}

\newcommand{\eqdef}{\mathbin{\stackrel{\rm def}{=}}}

\newtheorem{theorem}{Theorem}
\newtheorem{lemma}[theorem]{Lemma}

\newtheorem{corollary}[theorem]{Corollary}

\newcommand{\uhard}{\mathcal{U}_{hard}}
\newcommand{\proj}{\mathbf{proj}}
\newcommand{\inprod}[1]{\left\langle #1 \right\rangle}
\newcommand{\eps}{\varepsilon}

\newcommand{\R}{{\mathbb{R}}}
\newcommand{\SE}{\Pi}
\DeclareMathOperator*{\E}{\mathbb{E}}
\let\Pr\relax
\DeclareMathOperator*{\Pr}{\mathbb{P}}
\DeclareMathOperator{\Var}{Var}

\DeclareMathOperator{\poly}{poly}

\DeclareMathOperator{\argmin}{argmin}
\DeclareMathOperator{\nnz}{nnz}

\newcommand{\CorollaryName}[1]{\label{cor:#1}}

\newcommand{\EquationName}[1]{\label{eq:#1}}

\newcommand{\LemmaName}[1]{\label{lem:#1}}

\newcommand{\TheoremName}[1]{\label{thm:#1}}

\newcommand{\Corollary}[1]{Corollary~\ref{cor:#1}}

\newcommand{\Equation}[1]{Eq.\:\eqref{eq:#1}}

\newcommand{\Lemma}[1]{Lemma~\ref{lem:#1}}

\newcommand{\Theorem}[1]{Theorem~\ref{thm:#1}}

\begin{document}

\title{Lower Bounds for Oblivious Subspace Embeddings}
\author{Jelani Nelson\thanks{Harvard University. \texttt{minilek@seas.harvard.edu}. This work was done while the author was a member at the Institute for Advanced Study, supported by NSF CCF-0832797 and NSF DMS-1128155.}\and Huy L. Nguy$\tilde{\hat{\mbox{e}}}$n\thanks{Princeton
  University. \texttt{hlnguyen@princeton.edu}. Supported in part by
  NSF CCF-0832797 and a Gordon Wu fellowship.}}

\maketitle

\begin{abstract}
An {\em oblivious subspace embedding (OSE)} for some $\eps,\delta\in (0,1/3)$ and $d\le m\le n$ is a distribution $\mathcal{D}$ over $\R^{m\times n}$ such that for any linear subspace $W\subset \R^n$ of dimension $d$,
$$\Pr_{\Pi\sim\mathcal{D}}\left(\forall x\in W,\ (1-\eps)\|x\|_2\le \|\Pi x\|_2 \le (1+\eps)\|x\|_2 \right) \ge 1 - \delta .$$
We prove that any OSE with $\delta < 1/3$ must have $m = \Omega((d + \log(1/\delta))/\eps^2)$, which is optimal. Furthermore, if every $\Pi$ in the support of $\mathcal{D}$ is sparse, having at most $s$ non-zero entries per column, then we show tradeoff lower bounds between $m$ and $s$.
\end{abstract}

\section{Introduction}
A {\em subspace embedding} for some $\eps\in(0,1/3)$ and linear subspace $W$ is a matrix $\SE$ satisfying
$$\forall x\in W,\ (1-\eps)\|x\|_2 \le \|\SE x\|_2 \le (1+\eps)\|x\|_2 . $$

An {\em oblivious subspace embedding (OSE)} for some $\eps,\delta\in (0,1/3)$ and integers $d\le m\le n$ is a distribution $\mathcal{D}$ over $\R^{m\times n}$ such that for any linear subspace $W\subset \R^n$ of dimension $d$,
\begin{equation}
\Pr_{\SE\sim\mathcal{D}}\left(\forall x\in W,\ (1-\eps)\|x\|_2\le \|\SE x\|_2 \le (1+\eps)\|x\|_2 \right) \ge 1 - \delta . \EquationName{ose}
\end{equation}
That is, for any linear subspace $W\subset \R^n$ of bounded dimension, a random $\SE$ drawn according to $\mathcal{D}$ is a subspace embedding for $W$ with good probability.

OSE's were first introduced in \cite{Sarlos06} and have since been used to provide fast approximate randomized algorithms for numerical linear algebra problems such as least squares regression \cite{CW13,MM13,NN13b,Sarlos06}, low rank approximation \cite{CW09,CW13,NN13b,Sarlos06}, minimum margin hyperplane and minimum enclosing ball \cite{PBMD13}, and approximating leverage scores \cite{MDMW12}. For example, consider the least squares regression problem: given $A\in\R^{n\times d}, b\in\R^n$, compute
$$ x^* = \argmin_{x\in\R^d} \|Ax - b\|_2 .$$
The optimal solution $x^*$ is such that $Ax^*$ is the projection of $b$ onto the column span of $A$. Thus by computing the singular value decomposition (SVD) $A = U\Sigma V^T$ where $U\in\R^{n\times r},V\in\R^{d\times r}$ have orthonormal columns and $\Sigma\in\R^{r\times r}$ is a diagonal matrix containing the non-zero singular values of $A$ (here $r$ is the rank of $A$), we can set $x^* = V\Sigma^{-1}U^T b$ so that $Ax^* = UU^T b$ as desired. Given that the SVD can be approximated in time $\tilde{O}(nd^{\omega - 1})$\footnote{We say $g = \tilde{O}(f)$ when $g = O(f\cdot \mathrm{polylog}(f))$.} \cite{DDH07} where $\omega < 2.373\ldots$ is the exponent of square matrix multiplication \cite{Williams12}, we can solve the least squares regression problem in this time bound. 

A simple argument then shows that if one instead computes
$$ \tilde{x} = \argmin_{x\in\R^d} \|\SE A x - \SE b\|_2 $$
for some subspace embedding $\SE$ for the $(d+1)$-dimensional subspace spanned $b$ and the columns of $A$, then $\|A\tilde{x} - b\|_2 \le (1+O(\eps))\|Ax^* - b\|_2$, i.e.\ $\tilde{x}$ serves as a near-optimal solution to the original regression problem. The running time then becomes $\tilde{O}(md^{\omega - 1})$, which can be a large savings for $m\ll n$, plus the time to compute $\SE A$ and $\SE b$ and the time to find $\SE$.

It is known that a random gaussian matrix with $m = O((d+\log(1/\delta))/\eps^2)$ is an OSE (see for example the net argument in Clarkson and Woodruff \cite{CW13} based on the Johnson-Lindenstrauss lemma and a net in \cite{AHK06}). While this leads to small $m$, and furthermore $\SE$ is oblivious to $A,b$ so that its computation is ``for free'', the time to compute $\SE A$ is $\tilde{O}(mnd^{\omega-2})$, which is worse than solving the original least squares regression problem. Sarl\'{o}s constructed an OSE $\mathcal{D}$, based on the fast Johnson-Lindenstrauss transform of Ailon and Chazelle \cite{AC09}, with the properties that (1) $m = \tilde{O}(d/\eps^2)$, and (2) for any vector $y\in\R^n$ and $\SE$ in the support of $\mathcal{D}$, $\SE y$ can be computed in time $O(n\log n)$ for any $\SE$ in the support of $\mathcal{D}$. This implies an approximate least squares regression algorithm running in time $O(nd\log n) + \tilde{O}(d^{\omega}/\eps^2)$. 

A recent line of work sought to improve the $O(nd\log n)$ term above to a quantity that depends only on the sparsity of the matrix $A$ as opposed to its ambient dimension. The works \cite{CW13,MM13,NN13b} give an OSE with $m = O(d^2/\eps^2)$ where every $\SE$ in the support of the OSE has only $s=1$ non-zero entry per column. The work \cite{NN13b} also showed how to achieve $m = O(d^{1+\gamma}/\eps^2), s = \poly(1/\gamma)/\eps$ for any constant $\gamma>0$. Using these OSE's together with other optimizations (for details see the reductions in \cite{CW13}), these works imply approximate regression algorithms running in time $O(\nnz(A) + (d^3\log d)/\eps^2)$ (the $s=1$ case), or $O_\gamma(\nnz(A)/\eps + d^{\omega+\gamma}/\eps^2)$ or $O_\gamma((\nnz(A) + d^2)\log(1/\eps) + d^{\omega+\gamma})$ (the case of larger $s$). Interestingly the algorithm which yields the last bound only requires an OSE with distortion $(1+\eps_0)$ for constant $\eps_0$, while still approximately the least squares optimum up to $1+\eps$.

As seen above we now have several upper bounds, though our understanding of lower bounds for the OSE problem is lacking. Any subspace embedding, and thus any OSE, must have $m\ge d$ since otherwise some non-zero vector in the subspace will be in the kernel of $\SE$ and thus not have its norm preserved. Furthermore, it quite readily follows from the works \cite{KMN11,MWY13} that any OSE must have $m = \Omega(\min\{n, \log(d/\delta)/\eps^2\})$ (see \Corollary{simple}). Thus the best known lower bound to date is $m = \Omega(\min\{n, d + \eps^{-2}\log(d/\delta)\})$, while the best upper bound is $m = O(\min\{n, (d + \log(1/\delta))/\eps^2\})$ (the OSE supported only on the $n\times n$ identity matrix is indeed an OSE with $\eps = \delta = 0$). We remark that although some problems can make use of OSE's with distortion $1+\eps_0$ for some constant $\eps_0$ to achieve $(1+\eps)$-approximation to the final problem, this is not always true (e.g.\ no such reduction is known for approximating leverage scores). Thus it is important to understand the required dependence on $\eps$.

\paragraph{Our contribution I:} We show that for any $\eps,\delta\in(0,1/3)$, any OSE with distortion $1+\eps$ and error probability $\delta$ must have $m = \Omega(\min\{n, (d + \log(1/\delta))/\eps^2\})$, which is optimal.

\bigskip

We also make progress in understanding the tradeoff between $m$ and $s$. The work \cite{NN13a} observed via a simple reduction to nonuniform balls and bins that any OSE with $s=1$ must have $m = \Omega(d^2)$. Also recall the upper bound of \cite{NN13b} of $m = O(d^{1+\gamma}/\eps^2), s = \poly(1/\gamma)/\eps$ for any constant $\gamma>0$.

\paragraph{Our contribution II:} We show that for $\delta$ a fixed constant and $n>100d^2$, any OSE with $m = o(\eps^2 d^2)$ must have $s = \Omega(1/\eps)$. Thus a phase transition exists between sparsity $s=1$ and super-constant sparsity somewhere around $m$ being $d^2$. We also show that for $m < d^{1+\gamma}$ and $\gamma \in ((10\log\log d)/(\alpha\log d), \alpha/4)$ and $2/(\eps \gamma) < d^{1-\alpha}$, for any constant $\alpha>0$, it must hold that $s = \Omega(\alpha/(\eps\gamma))$. Thus the $s = \poly(1/\gamma)/\eps$ dependence of \cite{NN13b} is correct (although our lower bound requires $m < d^{1+\gamma}$ as opposed to $m < d^{1+\gamma}/\eps^2$).

\bigskip

Our proof in the first contribution follows Yao's minimax principle combined with concentration arguments and Cauchy's interlacing theorem. Our proof in the second contribution uses a bound for nonuniform balls and bins and the simple fact that for {\em any} distribution over unit vectors, two i.i.d.\ samples are not negatively correlated in expectation. 

\subsection{Notation}
We let $O^{n\times d}$ denote the set of all $n\times d$ real matrices with orthonormal columns. For a linear subspace $W\subseteq\R^n$, we let $\proj_W:\R^n\rightarrow W$ denote the projection operator onto $W$. That is, if the columns of $U$ form an orthonormal basis for $W$, then $\proj_W x = UU^T x$. We also often abbreviate ``orthonormal'' as o.n. In the case that $A$ is a matrix, we let $\proj_A$ denote the projection operator onto the subspace spanned by the columns of $A$. Throughout this document, unless otherwise specified all norms $\|\cdot\|$ are $\ell_2\rightarrow\ell_2$ operator norms in the case of matrix argument, and $\ell_2$ norms for vector arguments. The norm $\|A\|_F$ denotes Frobenius norm, i.e.\ $(\sum_{i,j} A_{i,j}^2)^{1/2}$. For a matrix $A$, $\kappa(A)$ denotes the condition number of $A$, i.e.\ the ratio of the largest to smallest singular value. We use $[n]$ for integer $n$ to denote $\{1,\ldots,n\}$. We use $A\lesssim B$ to denote $A\le CB$ for some absolute constant $C$, and similarly for $A\gtrsim B$.
\section{Dimension lower bound}

Let $U\in O^{n\times d}$ be such that the columns of $U$ form an o.n.\ basis for a $d$-dimensional linear subspace $W$. Then the condition in \Equation{ose} is equivalent to all singular values of $\SE U$ lying in the interval $[1-\eps, 1+\eps]$. Let $\kappa(A)$ denote the condition number of matrix $A$, i.e.\ its largest singular value divided by its smallest singular value, so that for any such $U$ an OSE has $\kappa(\SE U) \le 1+\eps$ with probability $1-\delta$ over the randomness of $\SE$. Thus $\mathcal{D}$ being an OSE implies the condition
\begin{equation}
\forall U\in O^{n\times d} \Pr_{\SE\sim\mathcal{D}}\left(\kappa(\SE U) > 1+\eps\right) < \delta \EquationName{ose2}
\end{equation}


We now show a lower bound for $m$ in any distribution $\mathcal{D}$ satisfying \Equation{ose2} with $\delta < 1/3$. Our proof will use a couple lemmas. The first is quite similar to the Johnson-Lindenstrauss lemma itself. Without the appearance of the matrix $D$, it would follow from the the analyses in \cite{DG03,JL84} using Gaussian symmetry.

\begin{theorem}[Hanson-Wright inequality {\cite{HW71}}]
Let $g = (g_1,\ldots,g_n)$ be such that $g_i\sim\mathcal{N}(0,1)$ are independent, and let $B\in\R^{n\times n}$ be symmetric. Then for all $\lambda>0$,
$$ \Pr\left(\left|g^T B g - \mathrm{tr}(B)\right| > \lambda\right) \lesssim e^{-\min\left\{\lambda^2/\|B\|_F^2, \lambda/\|B\|\right\}} .$$
\end{theorem}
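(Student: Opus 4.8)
The plan is to diagonalize $B$, use rotational invariance of the Gaussian to reduce to a tail bound for a weighted sum of independent centered $\chi^2$ random variables, and then run the Chernoff method with a careful estimate of the moment generating function. Since $B$ is symmetric I would first write its spectral decomposition $B = \sum_{i=1}^n \mu_i u_i u_i^T$, with $\{u_i\}$ an orthonormal eigenbasis and $\mu_i\in\R$ the eigenvalues, so that $\|B\|_F^2 = \sum_i \mu_i^2$ and $\|B\| = \max_i |\mu_i|$. Then $g^T B g - \tr(B) = \sum_i \mu_i(\inprod{u_i,g}^2 - 1)$, and since $(\inprod{u_1,g},\ldots,\inprod{u_n,g})$ is again a vector of i.i.d.\ standard Gaussians, it suffices to bound the tails of $Z \eqdef \sum_{i=1}^n \mu_i(g_i^2 - 1)$.

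For the upper tail I would use $\Pr(Z > \lambda) \le e^{-t\lambda}\E[e^{tZ}]$ for $t>0$. For $0 < t \le 1/(4\|B\|)$ one has $2t\mu_i < 1$ for every $i$, so by independence $\E[e^{tZ}] = \prod_i e^{-t\mu_i}(1 - 2t\mu_i)^{-1/2}$; taking logarithms and applying the elementary inequality $-x - \tfrac12\log(1-2x) \le Cx^2$ valid for $|x|\le 1/4$ gives $\log\E[e^{tZ}] \le Ct^2\sum_i \mu_i^2 = Ct^2\|B\|_F^2$, hence $\Pr(Z>\lambda) \le \exp(-t\lambda + Ct^2\|B\|_F^2)$ for all such $t$. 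Optimizing over $t$: if the unconstrained minimizer $t = \lambda/(2C\|B\|_F^2)$ satisfies $t \le 1/(4\|B\|)$, equivalently $\lambda \lesssim \|B\|_F^2/\|B\|$, it yields $\Pr(Z>\lambda) \le e^{-\lambda^2/(4C\|B\|_F^2)}$; otherwise take $t = 1/(4\|B\|)$ and use $\lambda \gtrsim \|B\|_F^2/\|B\|$ to absorb the quadratic term, obtaining $\Pr(Z>\lambda) \le e^{-c\lambda/\|B\|}$. In both regimes $\Pr(Z>\lambda) \lesssim e^{-c\min\{\lambda^2/\|B\|_F^2,\ \lambda/\|B\|\}}$.

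Finally, applying the same argument to $-Z$ (that is, to $-B$, which has the same Frobenius and operator norms) bounds the lower tail, and a union bound over the two tails gives the claimed two-sided estimate, the extra factor of $2$ being absorbed into the implicit constant in $\lesssim$. The only mildly delicate point is the moment-generating-function bookkeeping --- retaining the constraint $t \lesssim 1/\|B\|$ throughout and splitting the analysis into the two regimes of $\lambda$ relative to $\|B\|_F^2/\|B\|$ --- but this is entirely routine; the conceptual content is just Gaussian rotational invariance together with the standard Chernoff bound.
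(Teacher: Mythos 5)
The paper does not prove this statement at all: it is quoted as a known result with a citation to Hanson and Wright, so there is no in-paper argument to compare against. Your proof is the standard and correct derivation for the Gaussian case: the spectral decomposition $B=\sum_i\mu_iu_iu_i^T$ together with rotational invariance of $g$ exactly reduces the quadratic form to a weighted sum of independent centered $\chi^2$ variables (this shortcut is special to Gaussians; the general sub-Gaussian Hanson--Wright inequality needs decoupling to handle the off-diagonal part, but none of that is required here). The MGF computation, the inequality $-x-\tfrac12\log(1-2x)\le Cx^2$ for $|x|\le 1/4$, the two-regime optimization over $t\le 1/(4\|B\|)$, and the reflection $B\mapsto -B$ for the lower tail are all sound. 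One pedantic remark: what you actually prove is $\Pr(|g^TBg-\tr(B)|>\lambda)\lesssim e^{-c\min\{\lambda^2/\|B\|_F^2,\,\lambda/\|B\|\}}$ for some absolute constant $c<1$ in the exponent, whereas the theorem as displayed omits that constant; since the paper only ever invokes the bound in the form $e^{-\Omega(\cdot)}$, your version is exactly what is needed, but it is worth stating the constant explicitly so the claim you establish matches the claim you use.
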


\begin{lemma}\LemmaName{jl-lemma}
Let $u$ be a unit vector drawn at random from $S^{n-1}$, and let $E\subset \R^n$ be an $m$-dimensional linear subspace for some $1\le m\le n$. Let $D\in\R^{n\times n}$ be a diagonal matrix with smallest singular value $\sigma_{min}$ and largest singular value $\sigma_{max}$. Then for any $0<\eps<1$
$$\Pr_u \left(\|\proj_E D u\|^2 \notin (\tilde{\sigma}^2 \pm \eps\sigma_{max}^2)\cdot\frac mn\right) \lesssim e^{-\Omega(\eps^2 m)} $$
for some $\sigma_{min} \le \tilde{\sigma}\le \sigma_{max}$.
\end{lemma}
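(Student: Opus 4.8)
The plan is to reduce the statement to a direct application of the Hanson–Wright inequality. Write $u = g/\|g\|$ where $g = (g_1,\dots,g_n)$ has i.i.d.\ $\mathcal{N}(0,1)$ entries; this is the standard way to generate a uniformly random point of $S^{n-1}$. Then $\|\proj_E D u\|^2 = \|\proj_E D g\|^2 / \|g\|^2$, so it suffices to understand the numerator $\|\proj_E D g\|^2 = g^T (D P_E D) g$, where $P_E = \proj_E$ is the orthogonal projection onto $E$, and then divide by $\|g\|^2$, which concentrates around $n$ up to a $(1\pm\eps)$ factor with probability $1 - e^{-\Omega(\eps^2 n)}$ by the same inequality (or a Chernoff bound) applied to $B = I$. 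Since $n \ge m$, this tail is dominated by $e^{-\Omega(\eps^2 m)}$.

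First I would set $B = D P_E D$, which is symmetric PSD, and compute its relevant norms. Its trace is $\tr(D P_E D) = \tr(P_E D^2)$; since $D$ is diagonal with entries $d_1,\dots,d_n$, this equals $\sum_i (P_E)_{ii} d_i^2$, and because $\sum_i (P_E)_{ii} = \tr(P_E) = m$ with each $(P_E)_{ii} \in [0,1]$, this is a weighted average of the $d_i^2$ scaled by $m$: precisely $\tr(B) = \tilde\sigma^2 m$ for some $\tilde\sigma^2 \in [\sigma_{min}^2, \sigma_{max}^2]$, which is exactly the $\tilde\sigma$ promised in the statement. For the operator norm, $\|B\| = \|D P_E D\| \le \|D\| \|P_E\| \|D\| = \sigma_{max}^2$ (using $\|P_E\| = 1$). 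For the Frobenius norm, $\|B\|_F \le \|D P_E D\|_F \le \|D\|\cdot\|P_E D\|_F \le \sigma_{max}^2 \|P_E\|_F$, and $\|P_E\|_F^2 = \tr(P_E^T P_E) = \tr(P_E) = m$ since $P_E$ is an orthogonal projection, so $\|B\|_F^2 \le \sigma_{max}^4 m$.

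Now apply Hanson–Wright with $\lambda = \eps \sigma_{max}^2 m$ (taking $\eps$ small, say $\eps<1$, and noting a constant adjustment handles the $\|g\|^2$ normalization): then $\lambda^2/\|B\|_F^2 \ge \eps^2 \sigma_{max}^4 m^2 / (\sigma_{max}^4 m) = \eps^2 m$, and $\lambda/\|B\| \ge \eps \sigma_{max}^2 m / \sigma_{max}^2 = \eps m \ge \eps^2 m$, so the failure probability that $|g^T B g - \tilde\sigma^2 m| > \eps\sigma_{max}^2 m$ is $e^{-\Omega(\eps^2 m)}$. Combining with the concentration of $\|g\|^2$ around $n$ via a union bound, on the good event $\|\proj_E D u\|^2 = g^T B g / \|g\|^2 \in (\tilde\sigma^2 \pm O(\eps)\sigma_{max}^2)\cdot (m/n)$, and rescaling $\eps$ by a constant gives the claim. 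The only mildly delicate point — and the one step I would be most careful about — is handling the ratio cleanly: one wants the $(1\pm\eps)$ slack from the denominator to be absorbed into the additive $\eps\sigma_{max}^2 (m/n)$ term rather than a multiplicative error on $\tilde\sigma^2$, which works because $\tilde\sigma^2 \le \sigma_{max}^2$, so a relative $\eps$-error in $\tilde\sigma^2 m/n$ is at most an additive $\eps\sigma_{max}^2 m/n$; everything then folds into the stated bound after adjusting constants.
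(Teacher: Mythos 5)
Your proposal is correct and follows essentially the same route as the paper's proof: generate $u=g/\|g\|$ for Gaussian $g$, apply Hanson--Wright to the quadratic form with matrix $DUU^TD$ (your $DP_ED$ is the same matrix) using the identical trace, operator-norm, and Frobenius-norm bounds, concentrate $\|g\|^2$ separately, and take the ratio. Your explicit remark that the multiplicative $(1\pm\eps)$ error from the denominator is absorbed into the additive $\eps\sigma_{max}^2 m/n$ term because $\tilde\sigma^2\le\sigma_{max}^2$ is exactly the step the paper performs implicitly in its final display.
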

\begin{proof}
Let the columns of $U\in O^{n\times m}$ span $E$, and let $u_i$ denote the $i$th row of $U$. Let the singular values of $D$ be $\sigma_1^2,\ldots,\sigma_n^2$. The random unit vector $u$ can be generated as $g/\|g\|$ for a multivariate Gaussian $g$ with identity covariance matrix. Then 
\begin{equation}
\|\proj_E D u\| = \frac 1{\|g\|} \cdot \|UU^T D g\| = \frac {\|U^T D g\|}{\|g\|}  . \EquationName{use-hw}
\end{equation}

We have
$$
\E \|U^T D g\|^2 = \E g^T DUU^T Dg = \mathrm{tr}(DUU^T D) = \sum_{i=1}^n \sigma_i^2\cdot \|u_i\|^2 = \tilde{\sigma}^2 \sum_i \|u_i\|^2 = \tilde{\sigma}^2 m,
$$
for some $\sigma_{min}^2 \le \tilde{\sigma}^2 \le \sigma_{max}^2$. Also 
$$\|DUU^T D\|_F^2 = \sum_{i=1}^n\sum_{j=1}^n \sigma_i^2\sigma_j^2 \inprod{u_i, u_j}^2 \le \sigma_{max}^4 \sum_{i,j} \inprod{u_i,u_j}^2 = \sigma_{max}^4 \sum_{i,j} m ,$$
and $\|DUU^T D\| \le \|D\|^2\cdot \|UU^T\| = \sigma_{max}^2$. Therefore by the Hanson-Wright inequality,
$$ \Pr\left(\left|\|U^T D g\|^2 - \tilde{\sigma}^2 m\right| > \eps \sigma_{max}^2 m\right) \lesssim e^{-\Omega(\min\{\eps^2 m, \eps m\})} = e^{-\Omega(\eps^2 m)} .$$
Similarly $\E\|g\|^2 = n$ and $\|g\|$ is also the product of a matrix with orthonormal columns (the identity matrix), a diagonal matrix with $\sigma_{min} = \sigma_{max} = 1$ (the identity matrix), and a multivariate gaussian. The analysis above thus implies
$$ \Pr\left(\left|\|g\|^2 - n\right| > \eps n\right) \lesssim e^{-\Omega(\eps^2 n)} .$$
Therefore with probability $1 - C(e^{-\Omega(\eps^2 n)} + e^{-\Omega(\eps^2 m)})$ for some constant $C>0$,
$$ \|\proj_E Du\|^2 = \frac {\|U^T D g\|^2}{\|g\|^2} = \frac{(\tilde{\sigma}^2 \pm \eps \sigma_{max}^2)m}{(1\pm \eps) n} = \frac{(\tilde{\sigma}^2 \pm O(\eps) \sigma_{max}^2)m}{n}$$
\end{proof}

We also need the following lemma, which is a special case of Cauchy's interlacing theorem.

\begin{lemma}\LemmaName{interlacing}
Suppose $A\in\R^{n\times m},A'\in\R^{(n+1)\times m}$ such that $n+1\le m$ and the first $n$ rows of $A,A'$ agree.Then the singular values of $A,A'$ interlace. That is, if the singular values of $A$ are $\sigma_1,\ldots,\sigma_n$ and those of $A'$ are $\beta_1,\ldots,\beta_{n+1}$, 
$$ \beta_1 \le \sigma_1 \le \beta_2 \le \sigma_2 \le \ldots \le \beta_n \le \sigma_n \le \beta_{n+1} .$$
\end{lemma}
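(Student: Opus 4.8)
\emph{Proof plan.} The plan is to pass from singular values to eigenvalues of Gram matrices and recognize a rank-one positive semidefinite update. Writing $a\in\R^m$ for the $(n+1)$-st row of $A'$ viewed as a column vector, the hypothesis that the first $n$ rows of $A$ and $A'$ agree gives $A'^TA' = A^TA + aa^T$. Both sides are $m\times m$ symmetric PSD matrices; $M\eqdef A^TA$ has rank at most $n$, so listing its eigenvalues in increasing order they are $\underbrace{0,\ldots,0}_{m-n},\sigma_1^2,\ldots,\sigma_n^2$, and likewise $M'\eqdef A'^TA'$ has eigenvalues $\underbrace{0,\ldots,0}_{m-n-1},\beta_1^2,\ldots,\beta_{n+1}^2$. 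Since the $\sigma_i$ and $\beta_i$ are precisely the nonnegative square roots of these eigenvalues, it suffices to establish the interlacing $\lambda_k(M)\le\lambda_k(M')\le\lambda_{k+1}(M)$ for every $1\le k\le m-1$ (increasing-order indexing) and then read off which indices carry the $\sigma_i^2$'s versus the $\beta_i^2$'s.

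For that eigenvalue interlacing I would invoke the standard rank-one-perturbation fact, which follows from Courant--Fischer: the lower bound $\lambda_k(M')\ge\lambda_k(M)$ is Weyl monotonicity for $M'-M=aa^T\succeq 0$; for the upper bound I would intersect the span of the top $m-k+1$ eigenvectors of $M'$ with $a^\perp$ and with the span of the bottom $k+1$ eigenvectors of $M$, obtaining (by a dimension count) a nonzero $x$ with $x\perp a$, whence $x^TM'x=x^TMx$ while simultaneously $x^TM'x\ge\lambda_k(M')\|x\|^2$ and $x^TMx\le\lambda_{k+1}(M)\|x\|^2$. Plugging the eigenvalue lists into $\lambda_k(M')\ge\lambda_k(M)$ yields $\beta_{j+1}^2\ge\sigma_j^2$ for $1\le j\le n$, and plugging them into $\lambda_k(M')\le\lambda_{k+1}(M)$ yields $\beta_j^2\le\sigma_j^2$ for $1\le j\le n$; chaining these and taking square roots gives exactly $\beta_1\le\sigma_1\le\beta_2\le\cdots\le\beta_n\le\sigma_n\le\beta_{n+1}$.

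I do not expect a genuine obstacle here: this is classical linear algebra, and the effort is bookkeeping rather than any idea. The two points that need care are (i) tracking the multiplicity of the zero eigenvalue, so that the $n$ singular values of $A$ and the $n+1$ of $A'$ are matched against the correct indices of $M$ and $M'$ and the claimed chain genuinely drops out, and (ii) fixing the ordering convention ($\sigma_1$ and $\beta_1$ smallest). As an alternative I would, if preferred, cite Cauchy's interlacing theorem verbatim: $\left(\begin{smallmatrix}0 & A^T\\ A & 0\end{smallmatrix}\right)$ is the principal submatrix of $\left(\begin{smallmatrix}0 & A'^T\\ A' & 0\end{smallmatrix}\right)$ obtained by deleting the row and column indexed by the extra row of $A'$; the eigenvalues of these symmetric dilations are $\{\pm\sigma_i\}$ and $\{\pm\beta_i\}$ padded with zeros, and interlacing of those spectra, restricted to the nonnegative part, is precisely the statement.
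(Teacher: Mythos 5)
The paper does not actually prove \Lemma{interlacing}; it simply states it and cites Cauchy's interlacing theorem. Your proof is correct and fills in that gap. Your primary route --- passing to the Gram matrices $M=A^TA$ and $M'=A'^TA'=M+aa^T$, tracking the zero-eigenvalue multiplicities $m-n$ and $m-n-1$, and invoking the rank-one interlacing $\lambda_k(M)\le\lambda_k(M')\le\lambda_{k+1}(M)$ --- is a self-contained elementary derivation: Weyl monotonicity gives the lower bound, and your Courant--Fischer dimension count $(m-k+1)+(m-1)+(k+1)=2m+1>2m$ correctly produces a nonzero $x\perp a$ witnessing the upper bound. Your index bookkeeping ($\beta_j^2\le\sigma_j^2$ and $\sigma_j^2\le\beta_{j+1}^2$ for $1\le j\le n$) is also right, since $A$ being $n\times m$ with $n\le m$ ensures the eigenvalues of $A^TA$ are exactly the $\sigma_i^2$ padded with $m-n$ zeros. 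Your secondary observation --- applying Cauchy interlacing to the symmetric dilations $\bigl(\begin{smallmatrix}0&A^T\\A&0\end{smallmatrix}\bigr)$ and $\bigl(\begin{smallmatrix}0&A'^T\\A'&0\end{smallmatrix}\bigr)$, the former a principal submatrix of the latter --- is exactly the reduction the paper has in mind when it calls the lemma a ``special case of Cauchy's interlacing theorem.'' So you have both reproduced the intended justification and supplied a more elementary alternative; either stands on its own.
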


Lastly, we need the following theorem and corollary, which follows from \cite{KMN11}. A similar conclusion can be obtained using \cite{MWY13}, but requiring the assumption that $d < n^{1-\gamma}$ for some constant $\gamma>0$. 

\begin{theorem}\TheoremName{use-yao}
Suppose $\mathcal{D}$ is a distribution over $\R^{m\times n}$ with the property that for any $t$ vectors $x_1,\ldots,x_t\in\R^n$,
$$ \Pr_{\SE\sim \mathcal{D}}\left(\forall i\in[t],\ (1-\eps)\|x_i\| \le \|\SE x_i\| \le (1+\eps)\|x_i\|\right) \ge 1 - \delta .$$
Then $m \gtrsim \min\left\{n, \eps^{-2}\log(t/\delta) \right\}$.
\end{theorem}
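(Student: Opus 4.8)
The plan is to apply Yao's minimax principle: rather than reasoning about $\mathcal{D}$ directly, it suffices to exhibit a single distribution $\mu$ over $t$-tuples $(x_1,\ldots,x_t)\in(\R^n)^t$ such that \emph{every} fixed $\SE\in\R^{m\times n}$ fails to $(1\pm\eps)$-preserve all of $x_1,\ldots,x_t$ with probability strictly greater than $\delta$; averaging such a $\mu$ against $\mathcal{D}$ in both orders then produces a tuple on which $\SE\sim\mathcal{D}$ fails with probability $>\delta$, contradicting the hypothesis. For $\mu$ I would take $x_1,\ldots,x_t$ i.i.d.\ uniform on $S^{n-1}$. Writing $p(\SE)=\Pr_{x\sim S^{n-1}}(\|\SE x\|\notin[1-\eps,1+\eps])$, the failure probability under $\mu$ is $1-(1-p(\SE))^t$, so it is enough to show $p(\SE)\gtrsim\delta/t$ for every $\SE$ whenever $m$ is below the claimed bound (using that $1-(1-p)^t>\delta$ once $p\gtrsim\delta/t$, since $\delta<1/3$).

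The heart of the argument is an \emph{anticoncentration} statement: for every $\SE\in\R^{m\times n}$ with $r:=\mathrm{rank}(\SE)<n$,
$$\Pr_{x\sim S^{n-1}}\big(\|\SE x\|\notin[1-\eps,1+\eps]\big)\ \gtrsim\ e^{-O(\eps^2 r)}$$
(read as $\Omega(1)$ when $\eps^2 r=O(1)$). Writing $x=g/\|g\|$ for a standard Gaussian $g$ and diagonalizing via the SVD, $\|\SE x\|^2=\big(\sum_{i\le r}\sigma_i^2 g_i^2\big)/\|g\|^2$: a ratio of a weighted sum of $r$ i.i.d.\ $\chi^2_1$'s to the essentially deterministic $\|g\|^2\approx n$. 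I would split on $\|\SE\|_F^2=\sum_i\sigma_i^2$. If $\|\SE\|_F^2$ is bounded away from $n$, then $\E_x\|\SE x\|^2=\|\SE\|_F^2/n$ is bounded away from $1$, and a second-moment / Paley--Zygmund computation (using $\E_x(x^TAx)^2\lesssim(\tr A)^2+\tr(A^2)$) already gives $p(\SE)=\Omega(1)$. If $\|\SE\|_F^2\approx n$, then normalizing $w_i=\sigma_i^2/\sum_j\sigma_j^2$ we have $\sum w_i^2\ge 1/r$, so $\sum w_i g_i^2$ has variance $\gtrsim 1/r$; a Berry--Esseen comparison to a Gaussian — handling separately the case that one $w_i$ is a constant fraction of the total, where one conditions directly on that coordinate being large — together with the standard lower bound on the Gaussian tail yields deviation probability $\gtrsim e^{-O(\eps^2 r)}$.

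Granting the lemma, the conclusion is immediate. We may assume $m<n$, since otherwise $m\ge n\ge\min\{n,\eps^{-2}\log(t/\delta)\}$; then $r\le m<n$. If moreover $m\le\frac{c}{\eps^2}\log(t/\delta)$ for a small enough absolute constant $c$, then for every $\SE$ we get $p(\SE)\gtrsim e^{-O(\eps^2 m)}\ge(t/\delta)^{-O(c)}\gtrsim\delta/t$, so the distribution $\mu$ makes every $\SE$ fail with probability $>\delta$, contradicting the hypothesis via Yao. Hence $m>\frac{c}{\eps^2}\log(t/\delta)$, and in all cases $m\gtrsim\min\{n,\eps^{-2}\log(t/\delta)\}$.

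The main obstacle is the anticoncentration lemma: standard tools (the Hanson--Wright inequality, the Johnson--Lindenstrauss lemma) give \emph{upper} bounds on deviation probabilities, whereas here one needs a matching \emph{lower} bound holding uniformly over all low-rank $\SE$, and lower bounds on the tails of a quadratic Gaussian chaos are considerably more delicate — the Berry--Esseen step must be done carefully, and degenerate spectra (a few dominant singular values, or $\|\SE\|_F^2$ far from $n$) have to be absorbed by ad hoc second-moment arguments. A secondary nuisance is that making all constants line up for the full range $\delta\in(0,1/3)$ with small $t$ either requires tracking constants in those second-moment bounds or invoking the known $t=1$ Johnson--Lindenstrauss lower bound (with its sharp constant) as a black box for $t=O(1)$.
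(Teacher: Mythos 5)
Your proposal follows essentially the same route as the paper's proof: Yao's minimax principle applied to the $t$-fold product of the uniform-on-$S^{n-1}$ hard distribution, reducing to the single-vector anticoncentration bound $p(\SE) \gtrsim e^{-O(\eps^2 m + 1)}$ for $m \le n/2$, after which the hypothesis $(1-p(\SE))^t \ge 1-\delta$ forces $p(\SE) \lesssim \delta/t$ and hence $m \gtrsim \eps^{-2}\log(t/\delta)$. The one structural difference is that the paper does not re-derive that anticoncentration lemma --- it cites the $t=1$ Johnson--Lindenstrauss lower bound of \cite{KMN11} directly, which is precisely the shortcut you flag in your final paragraph; your Paley--Zygmund/Berry--Esseen sketch would be by far the most delicate part of a from-scratch proof, and the paper sidesteps it entirely by reference.
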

\begin{proof}
The proof uses Yao's minimax principle. That is, let $\mathcal{U}$ be an arbitrary distribution over $t$-tuples of vectors in $S^{n-1}$. Then 
\begin{equation}
\Pr_{(x_1,\ldots,x_t)\sim\mathcal{U}} \Pr_{\SE\sim\mathcal{D}}\left(\forall i\in[t],\ |\|\SE x_i\|^2 - 1| \le \eps \right) \ge 1-\delta .
\end{equation}
Switching the order of probabilistic quantifiers, an averaging argument implies the existence of a fixed matrix $\SE_0 \in\R^{m\times n}$ so that
\begin{equation}
\Pr_{(x_1,\ldots,x_t)\sim\mathcal{U}} \left(\forall i\in[t],\  |\|\SE_0 x\|^2 - 1| \le \eps\right) \ge 1 - \delta .\EquationName{impossible}
\end{equation}
The work \cite[Theorem 9]{KMN11} gave a particular distribution $\uhard$ for the case $t=1$ so that no $\SE_0$ can satisfy \Equation{impossible} unless $m \gtrsim \min\{n,\eps^{-2}\log(1/\delta)\}$. In particular, it showed that the left hand side of \Equation{impossible} is at most $1 - e^{-O(\eps^2 m + 1)}$ as long as $m \le n/2$ in the case $t=1$. For larger $t$, we simply let the hard distribution be $\uhard^{\otimes t}$, i.e.\ the $t$-fold product distribution of $\uhard$. Then the left hand side of \Equation{impossible} is at most $(1 - e^{-C(\eps^2 m + 1)})^t$. Let $\delta' = e^{-C(\eps^2 m + 1)}$. Thus $\mathcal{D}$ cannot satisfy the property in the hypothesis of the lemma if $(1-\delta')^t < 1 - \delta$. We have $(1 - \delta')^t \le e^{-t\delta'}$, and furthermore $e^{-x} = 1 - \Theta(x)$ for $0<x<1/2$. Thus we must have $t\delta' = O(\delta)$, i.e.\ $e^{-C(\eps^2 m + 1)} = \delta' = O(\delta/t)$. Rerranging terms proves the theorem.
\end{proof}

\begin{corollary}\CorollaryName{simple}
Any OSE distribution $\mathcal{D}$ over $\R^{m\times n}$ must have $m = \Omega(\min\{n, \eps^{-2}\log(d/\delta)\})$.
\end{corollary}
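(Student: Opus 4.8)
The plan is to deduce this as an immediate consequence of \Theorem{use-yao} with $t = d$. The key observation is that an OSE for $d$-dimensional subspaces, when specialized to one fixed $d$-dimensional subspace, in particular preserves the norm of every vector lying in that subspace simultaneously with probability at least $1-\delta$; so it certainly preserves any $d$ chosen vectors lying in it.

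Concretely, I would fix the subspace $W = \mathrm{span}(e_1,\ldots,e_d)\subset\R^n$ spanned by the first $d$ standard basis vectors (this is where the hypothesis $d \le n$ is used implicitly; if $d > n$ the bound $m \ge d$ already exceeds $n$ and there is nothing to prove after the $\min$). By \Equation{ose} applied to $W$, with probability at least $1-\delta$ over $\SE\sim\mathcal{D}$ we have $(1-\eps)\|x\| \le \|\SE x\| \le (1+\eps)\|x\|$ for \emph{all} $x\in W$, and in particular for $x_1 = e_1,\ldots,x_d = e_d$. Hence $\mathcal{D}$ satisfies the hypothesis of \Theorem{use-yao} with the $t = d$ vectors $x_1,\ldots,x_d$. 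Applying that theorem gives $m \gtrsim \min\{n,\eps^{-2}\log(d/\delta)\}$, which is the claim.

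There is essentially no obstacle here beyond bookkeeping: the only thing to be careful about is that \Theorem{use-yao} requires the norm-preservation of the $t$ vectors to hold \emph{jointly} with probability $1-\delta$ (not merely marginally for each), but this is exactly what the OSE guarantee provides, since it preserves the entire subspace $W \supseteq \{x_1,\ldots,x_d\}$ in a single event of probability $\ge 1-\delta$. One could equally take any $d$ linearly independent unit vectors; the standard basis is just the cleanest choice.
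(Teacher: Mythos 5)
There is a genuine gap: you have misread the hypothesis of \Theorem{use-yao}. That theorem assumes the norm-preservation property holds \emph{for any} $t$ vectors $x_1,\ldots,x_t\in\R^n$ — a universal quantification over all $t$-tuples — not merely for one particular tuple of your choosing. Its proof makes essential use of this: the hard instance is the product distribution $\uhard^{\otimes t}$, where each of the $t$ vectors is drawn independently from the KMN11 hard distribution over all of $S^{n-1}$. These random vectors do not lie in your fixed subspace $W=\mathrm{span}(e_1,\ldots,e_d)$; generically they span an arbitrary $d$-dimensional subspace of $\R^n$. Verifying the property only for $x_i=e_i$ therefore does not put you in a position to invoke the theorem. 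Indeed, if one could get away with a single fixed $d$-dimensional subspace, the resulting lower bound would degrade to $\Omega(\min\{d,\eps^{-2}\log(d/\delta)\})$ rather than $\Omega(\min\{n,\eps^{-2}\log(d/\delta)\})$, since the hard distribution would then live in a $d$-dimensional ambient space.

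The repair is short and is exactly what the paper does: for an \emph{arbitrary} tuple of $d$ vectors $x_1,\ldots,x_d\in\R^n$, their span has dimension at most $d$, so it is contained in some $d$-dimensional subspace $W$ (here $d\le n$ is used); the OSE guarantee applied to that $W$ preserves all of $W$, hence all the $x_i$ simultaneously, in a single event of probability at least $1-\delta$. This establishes the hypothesis of \Theorem{use-yao} with $t=d$ for every $d$-tuple, and the conclusion follows. Your final remark that the joint (rather than marginal) guarantee is what matters is correct and is the right observation — it just has to be made for every tuple, not one.
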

\begin{proof}
We have that for any $d$-dimensional subspace $W\subset\R^n$, a random $\SE\sim\mathcal{D}$ with probability $1-\delta$ simultaneously preserves norms of all $x\in W$ up to $1\pm\eps$. Thus for any set of $d$ vectors $x_1,\ldots,x_d\in\R^n$, a random such $\SE$ with probability $1-\delta$ simultaneously preserves the norms of these vectors since it even preserves their span. The lower bound then follows by \Theorem{use-yao}.
\end{proof}

Now we prove the main theorem of this section.

\begin{theorem}
Let $\mathcal{D}$ be any OSE with $\eps,\delta < 1/3$. Then $m = \Omega(\min\{n, d/\eps^2\})$.
\end{theorem}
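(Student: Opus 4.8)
\noindent\emph{Proof approach.}\quad The plan is to mimic the structure of the proof of \Theorem{use-yao}: use Yao's minimax principle to replace the random $\SE$ by a single fixed matrix, then exhibit one hard distribution over $d$-dimensional subspaces that no fixed $m\times n$ matrix with $m$ too small can embed. Since an OSE satisfies $\Pr_{\SE}(\kappa(\SE U) > 1+\eps) < \delta < 1/3$ for every $U\in O^{n\times d}$, for any distribution $\mathcal U$ over $O^{n\times d}$ an averaging argument produces a fixed $\SE_0\in\R^{m\times n}$ with $\Pr_{U\sim\mathcal U}(\kappa(\SE_0 U)\le 1+\eps) > 2/3$. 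I would take $\mathcal U$ to be Haar measure on $O^{n\times d}$ (a uniformly random $d$-dimensional subspace) and show this is impossible once $m\le c\min\{n,d/\eps^2\}$ for a small absolute constant $c$. If $m=\Omega(n)$ there is nothing to prove, and if $m<d$ then $\SE_0 U$ is rank-deficient so $\kappa(\SE_0 U)=\infty$; so assume $d\le m\ll n$.

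Next I would pass to a convenient normal form. Writing the SVD $\SE_0 = VDQ^T$ with $V,Q$ having orthonormal columns and $D=\diag(d_1,\dots,d_m)\succeq 0$, we have $\kappa(\SE_0 U)=\kappa(DQ^T U)$. Let $A\eqdef (DQ^T U)^T(DQ^T U) = U^T Q D^2 Q^T U\in\R^{d\times d}$, a PSD matrix whose eigenvalues are the squared singular values of $\SE_0 U$. The event $\kappa(\SE_0 U)\le 1+\eps$ says these eigenvalues all lie in an interval $[\lambda,(1+\eps)^2\lambda]$, which, with $c=\tr(A)/d$ (so that $\|A-cI_d\|_F^2 = \|A\|_F^2 - \tr(A)^2/d$), forces $\|A-cI_d\|_F^2 \lesssim \eps^2\tr(A)^2/d$; conversely, if $\|A-cI_d\|_F^2 \gg \eps^2\tr(A)^2/d$ then some eigenvalue of $A$ deviates from $c$ by $\gg\eps c$, so $\kappa(\SE_0 U) > 1+\eps$. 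Hence it suffices to prove that, over Haar $U$, one has $\|A-cI_d\|_F^2 \gg \eps^2\tr(A)^2/d$ with probability more than $1/3$ whenever $m\ll d/\eps^2$.

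For this I would compute moments by Gaussian symmetrization, exactly as in \Lemma{jl-lemma}: realizing each column of $U$ as a normalized Gaussian, $Q^T U$ is $1/\sqrt n$ times an $m\times d$ matrix $\tilde G$ of i.i.d.\ $\mathcal N(0,1)$ entries up to norm corrections negligible for $m,d\ll n$ (controlled by Hanson--Wright), so $A\approx \frac1n\sum_{k=1}^m d_k^2\,\tilde g_k\tilde g_k^T$ with $\tilde g_k\in\R^d$ the rows of $\tilde G$. A direct fourth-moment calculation gives $\E\tr(A)=\frac dn\sum_k d_k^2$ and $\E\|A-cI_d\|_F^2 = \E\|A\|_F^2 - \E\tr(A)^2/d \gtrsim \frac{d^2}{n^2}\sum_k d_k^4$, while $\eps^2\tr(A)^2/d$ has expectation $\approx \eps^2\frac{d}{n^2}\big(\sum_k d_k^2\big)^2$. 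The crucial point is that $\sum_k d_k^4 \ge \big(\sum_k d_k^2\big)^2/m$ by Cauchy--Schwarz \emph{for an arbitrary spectrum of $\SE_0$}, so $\E\|A-cI_d\|_F^2 \gtrsim \frac{d^2}{mn^2}\big(\sum_k d_k^2\big)^2$, beating $\eps^2\tr(A)^2/d$ by a factor $\Omega(d/(\eps^2 m))\gg 1$ precisely when $m\ll d/\eps^2$. To turn this into a probability-$>1/3$ statement I would bound the variance of $\|A-cI_d\|_F^2$ (another Gaussian polynomial, again controlled via Hanson--Wright) and apply Paley--Zygmund; \Lemma{interlacing} enters here and in the degenerate cases, letting us reduce to the smallest singular value of $\SE_0 U$ by a column-revealing argument (each new column of $U$ only widens the singular-value spread), which cleanly handles spectra of $\SE_0$ that are close to rank-deficient.

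The step I expect to be the main obstacle is exactly this last one: the second-moment estimate above is clean \emph{in expectation} for every spectrum, but when $D$ is heavily ``peaked'' (a few dominant $d_k$) the relevant Gaussian polynomials are heavy-tailed, so a clean probability bound needs either a careful variance computation or a case split --- roughly, a ``flat'' $D$ puts us in the Marchenko--Pastur regime where everything concentrates and $\kappa(\SE_0 U)\approx 1+\Theta(\sqrt{d/m})$, whereas a peaked $D$ is handled by tracking a few coordinates and the column norms $\|\SE_0 u^{(j)}\|$ directly, with the tail of the spectrum controlled by \Lemma{jl-lemma}. The remaining bookkeeping --- the Haar-versus-Gaussian error terms and the exact constants in the $1+\eps$ versus $(1+\eps)^2$ passage --- is routine.
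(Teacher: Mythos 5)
Your skeleton matches the paper's (Yao's minimax principle, Haar-random $U$, SVD normal form $\SE_0=VDQ^T$, and the $1+\Theta(\sqrt{d/m})$ spectral-spread heuristic), but you are missing the one idea that makes the paper's argument go through, and your proposed substitute has a concrete quantitative gap. The paper never analyzes the interaction between the arbitrary spectrum $D$ and the random subspace at all. Instead it inserts a Haar-random rotation $M\in O^{m\times m}$ (using rotation invariance of $U$) and applies a symmetrization/reflection trick: if for a given $U$ there exist unit vectors $z_1,z_2$ with $\|A^TUz_1\|/\|A^TUz_2\|>1+\eps$ (i.e.\ $\kappa(A^TU)>1+\eps$, where $A^TU$ is just the first $m$ rows of $U$ with \emph{no} $D$ present), then composing $M$ with a random reflection swapping the two normalized images shows $\kappa(D'MA^TU)>1+\eps$ with probability exactly $\ge 1/2$ over $M$, \emph{for every} diagonal $D'$. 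This reduces the whole theorem to a statement about the condition number of a truncated random orthonormal matrix, with a clean $\tfrac12\cdot\tfrac23>\tfrac13$ probability bookkeeping. Your plan keeps $D$ inside the quantity $A=U^TQD^2Q^TU$ throughout, which is exactly the "arbitrary spectrum" difficulty the paper is engineered to avoid, and which you yourself flag as the unresolved obstacle.

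The reason your substitute does not close this gap is quantitative. To contradict the OSE property for every $\delta<1/3$ you must exhibit a distribution on $U$ under which $\kappa(\SE_0U)>1+\eps$ with probability at least $1/3$ --- not merely with positive constant probability. Your conversion from the (correct) expectation inequality $\E\bigl[\|A\|_F^2-\tr(A)^2/d\bigr]\gtrsim \tfrac{d}{\eps^2 m}\cdot\E\bigl[\eps^2\tr(A)^2/d\bigr]$ to a probability statement goes through Paley--Zygmund applied to a degree-4 Gaussian chaos. For such a chaos the second-moment ratio $(\E Z)^2/\E Z^2$ is only bounded below by a small universal constant (via hypercontractivity, something like $e^{-\Theta(1)}$ with a poor constant), so this route proves failure probability bounded below by, say, $10^{-3}$, which only rules out OSEs with $\delta<10^{-3}$ --- a strictly weaker theorem. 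To reach $1/3$ you would need genuine concentration of $\|A-cI\|_F^2$ uniformly over all spectra of $D$, and for peaked spectra the relevant chaos does not concentrate around its mean at the scale you need; this is precisely the case split you defer. The reflection trick supplies the missing probability mass for free, which is why the paper can afford to prove only a $2/3$-probability statement about $\kappa(A^TU)$ (via \Lemma{jl-lemma}, \Lemma{interlacing}, and the two-case construction of $x_1$) and still land above $1/3$ overall.
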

\begin{proof}
We assume $d/\eps^2 \le cn$ for some constant $c>0$. Our proof uses Yao's minimax principle. Thus we must construct a distribution $\uhard$ such that
\begin{equation}
\Pr_{U\sim\uhard} \left(\kappa(\SE_0 U) > 1+\eps\right) < \delta .\EquationName{impossible2}
\end{equation}
cannot hold for any $\SE_0\in\R^{m\times n}$ which does not satisfy $m = \Omega(d/\eps^2)$. The particular $\uhard$ we choose is as follows: we let the $d$ columns of $U$ be independently drawn uniform random vectors from the sphere, post-processed using Gram-Schmidt to be orthonormal. That is, the columns of $U$ are an o.n.\ basis for a random $d$-dimensional linear subspace of $\R^n$.

Let $\SE_0 = LDW^T$ be the singular value decomposition (SVD) of $\SE_0$, i.e.\ $L\in O^{m\times n}, W\in O^{n\times n}$, and $D$ is $n\times n$ with $D_{i,i} \ge 0$ for all $1\le i\le m$, and all other entries of $D$ are $0$. Note that $W^T U$ is distributed identically as $U$, which is identically distributed as $W' U$ where $W'$ is an $n\times n$ block diagonal matrix with two blocks. The upper-left block of $W'$ is a random rotation $M\in O^{m\times m}$ according to Haar measure. The bottom-right block of $W'$ is the $(n-m)\times (n-m)$ identity matrix.  Thus it is equivalent to analyze the singular values of the matrix $LDW' U$. Also note that left multiplication by $L$ does not alter singular values, and the singular values of $DW' U$ and $D'MA^T U$ are identical, where $A$ is the $n\times m$ matrix whose columns are $e_1,\ldots,e_m$. Also $D'$ is an $m\times m$ diagonal matrix with $D'_{i,i} = D_{i,i}$. Thus we wish to show that if $m$ is sufficiently small, then  
\begin{equation}
\Pr_{M\sim O^{m\times m},U\sim\uhard}\left(\kappa(D'MA^TU) > 1+\eps\right) > \frac 13 \EquationName{main-target}
\end{equation}

Henceforth in this proof we assume for the sake of contradiction that $m \le c\cdot \min\{d/\eps^2, n\}$ for some small positive constant $c>0$. Also note that we may assume by \Corollary{simple} that $m = \Omega(\min\{n, \eps^{-2}\log(d/\delta)\})$.

Assume that with probability strictly larger than $2/3$ over the choice of $U$, we can find unit vectors $z_1,z_2$ so that $\|A^T U z_1\| / \|A^T U z_2\| > 1+\eps$. Now suppose we have such $z_1,z_2$. Define $y_1 = A^T U z_1/\|A^T U z_1\|, y_2 = A^T U z_2/\|A^T U z_2\|$. Then a random $M\in O^{m\times m}$ has the same distribution as $M'T$, where $M'$ is i.i.d.\ as $M$, and $T$ can be any distribution over $O^{m\times m}$, so we write $M = M' T$. $T$ may even depend on $U$, since $M' U$ will then still be independent of $U$ and a random rotation (according to Haar measure). Let $T$ be the $m\times m$ identity matrix with probability $1/2$, and $R_{y_1,y_2}$ with probability $1/2$ where $R_{y_1,y_2}$ is the reflection across the bisector of $y_1,y_2$ in the plane containing these two vectors, so that $R_{y_1,y_2}y_1 = y_2, R_{y_1,y_2} y_2 = y_1$. Now note that for any fixed choice of $M'$ it must be the case that $\|D'M'y_1\| \ge \|D'M'y_2\|$ or $\|D'M'y_2\|\ge \|D'M'y_1\|$. Thus $\|D'M'Ty_1\| \ge \|D'M'Ty_2\|$ occurs with probability $1/2$ over $T$, and the reverse inequality occurs with probability $1/2$. Thus for this fixed $U$ for which we found such $z_1,z_2$, over the randomness of $M',T$ we have $\kappa(D'M A^T U) \ge \|D'MA^T U z_1\| / \|D' MA^T U z_2\|$ is greater than $1+\eps$ with probability at least $1/2$. Since such $z_1,z_2$ exist with probability larger than $2/3$ over chioce of $U$, we have established \Equation{main-target}. It just remains to establish the existence of such $z_1,z_2$.

Let the columns of $U$ be $u^1,\ldots,u^d$, and define $\tilde{u}^i = A^T u^i$ and $\tilde{U} = A^T U$. Let $U_{-d}$ be the $n\times (d-1)$ matrix whose columns are $u^1,\ldots,u^{d-1}$, and let $\tilde{U}_{-d} = A^T U_{-d}$. Write $A = A^{\parallel} + A^{\perp}$, where the columns of $A^{\parallel}$ are the projections of the columns of $A$ onto the subspace spanned by the columns of $U_{-d}$, i.e.\ $A^{\parallel} = U_{-d}U_{-d}^T A$. Then
\begin{equation}
\|A^{\parallel}\|_F^2 = \|U_{-d}U_{-d}^T A\|_F^2 = \|\tilde{U}_{-d}\|_F^2 = \sum_{i=1}^{d-1} \sum_{r=1}^m (u^i_r)^2 . \EquationName{use-jl}
\end{equation}

By \Lemma{jl-lemma} with $D=I$ and $E = \mathrm{span}(e_1,\ldots,e_m)$, followed by a union bound over the $d-1$ columns of $U_{-d}$, the right hand side of \Equation{use-jl} is between $(1-C_1\eps)(d-1)m/n$ and $(1+C_1\eps)(d-1)m/n$ with probability at least $1 - C(d-1)\cdot e^{-C' C_1 \eps^2 m}$ over the choice of $U$. This is $1 - d^{-\Omega(1)}$ for $C_1>0$ sufficiently large since $m = \Omega(\eps^{-2}\log d)$.  Now, if $\kappa(\tilde{U}) > 1+\eps$ then $z_1,z_2$ with the desired properties exist. Suppose for the sake of contradiction that both $\kappa(\tilde{U}) \le 1+\eps$ and $(1-C_1\eps)(d-1)m/n \le \|\tilde{U}_{-d}\|_F^2 \le (1+C_1\eps)(d-1)m/n$. Since the squared Frobenius norm is the sum of squared singular values, and since $\kappa(\tilde{U}_{-d}) \le \kappa(\tilde{U})$ due to \Lemma{interlacing}, all the singular values of $\tilde{U}_{-d}$, and hence $A^{\parallel}$, are between $(1-C_2\eps)\sqrt{m/n}$ and $(1+C_2\eps)\sqrt{m/n}$. Then by the Pythagorean theorem the singular values of $A^{\perp}$ are in the interval $[\sqrt{1 - (1+C_2\eps)^2m/n}, \sqrt{1 - (1-C_2\eps)^2m/n}] \subseteq [1 - (1+C_3\eps)m/n, 1 - (1-C_3\eps)m/n]$.

Since the singular values of $\tilde{U}$ and $\tilde{U}^T$ are the same, it suffices to show $\kappa(\tilde{U}^T) > 1+\eps$. For this we exhibit two unit vectors $x_1,x_2$ with $\|\tilde{U}^T x_1\|/\|\tilde{U}^T x_2\| > 1+\eps$. Let $B\in O^{m\times d-1}$ have columns forming an o.n.\ basis for the column span of $AA^T U_{-d}$. Since $B$ has o.n.\ columns and $u^d$ is orthogonal to the column span of $U_{-d}$,
$$\|\proj_{\tilde{U}_{-d}} \tilde{u}^d\| = \|BB^T A^T u^d\| = \|B^T A^T u^d\| = \|B^T (A^{\perp})^T u^d\|.$$
Let $(A^{\perp})^T = C\Lambda E^T$ be the SVD, where $C\in\R^{m\times m},\Lambda\in\R^{m\times m}, E\in\R^{n\times m}$. As usual $C,E$ have o.n.\ columns, and $\Lambda$ is diagonal with all entries in $[1 - (1+C_3\eps)m/n, 1 - (1-C_3\eps)m/n]$. Condition on $U_{-d}$. The columns of $E$ form an o.n.\ basis for the column space of $A^{\perp}$, which is some $m$-dimensional subspace of the $(n-d+1)$-dimensional orthogonal complement of the column space of $U_{-d}$. Meanwhile $u^d$ is a uniformly random unit vector drawn from this orthogonal complement, and thus $\|E^T u_d\|^2 \in [(1-C_4\eps)^2m/(n-d+1), (1+C_4\eps)^2m/(n-d+1)] \subset [(1-C_5\eps)m/n, (1+C_5\eps)m/n]$ with probability $1-d^{-\Omega(1)}$ by \Lemma{jl-lemma} and the fact that $d \le \eps n$ and $m = \Omega(\eps^{-2}\log d)$. Note then also that $\|\Lambda E^T u^d\| = \|\tilde{u}^d\| = (1\pm C_6 \eps)\sqrt{m/n}$ with probability $1 - d^{-\Omega(1)}$ since $\Lambda$ has bounded singular values.

Also note $E^T u/\|E^T u\|$ is uniformly random in $S^{m-1}$, and also $B^T C$ has orthonormal rows since $B^T C C^T B = B^T B = I$, and thus again by \Lemma{jl-lemma} with $E$ being the row space of $B^T C$ and $D = \Lambda$, we have $\|B^T C \Lambda E^T u\| = \Theta(\|E^T u\|\cdot \sqrt{d/m}) = \Theta(\sqrt{d/n})$ with probability $1 - e^{-\Omega(d)}$.

We first note that by \Lemma{interlacing} and our assumption on the singular values of $\tilde{U}_{-d}$, $\tilde{U}^T$ has smallest singular value at most $(1+C_2\eps)\sqrt{m/n}$. We then set $x_2$ to be a unit vector such that $\|\tilde{U}^T x_2\| \le (1+C_2\eps)\sqrt{m/n}$.

It just remains to construct $x_1$ so that $\|\tilde{U}^T x_1\| > (1+\eps)(1+C_2\eps)\sqrt{m/n}$. To construct $x_1$ we split into two cases:

\paragraph{Case 1 ($m \le cd/\eps$):} In this case we choose 
$$ x_1 = \frac{\proj_{\tilde{U}_{-d}} \tilde{u}^d}{\|\proj_{\tilde{U}_{-d}} \tilde{u}^d\|} .$$

Then 
\begin{align*}
\|\tilde{U}^T x_1\|^2 &= \|\tilde{U}_{-d}^T x_1\|^2 + \inprod{\tilde{u}^d, x_1}^2\\
{}&\ge (1 - C_2\eps)^2\frac mn + \|\proj_{\tilde{U}_{-d}} \tilde{u}^d\|^2\\
{}&\ge (1 - C_2\eps)^2\frac mn + C\frac dn .\\
{}& \ge \frac mn\left((1 - C_2\eps)^2 + \frac Cc\eps\right)
\end{align*}

For $c$ small, the above is bigger than $(1+\eps)^2(1+C_2\eps)^2m/n$ as desired.

\paragraph{Case 2 ($cd/\eps \le m \le cd/\eps^2$):} In this case we choose
$$ x_1 = \frac 1{\sqrt{2}}\left[\frac{\overbrace{\proj_{\tilde{U}_{-d}} \tilde{u}^d}^{x^{\parallel}}}{\|\proj_{\tilde{U}_{-d}} \tilde{u}^d\|} + \frac{\overbrace{\proj_{\tilde{U}^{\perp}_{-d}} \tilde{u}^d}^{x^{\perp}}}{\|\proj_{\tilde{U}^{\perp}_{-d}} \tilde{u}^d\|}\right] .$$
Then
\allowdisplaybreaks
\begin{align}
\nonumber \|\tilde{U}^T x_1\|^2 &= \frac 12 \left\|\tilde{U}^T\left(\frac{x^{\parallel}}{\|x^{\parallel}\|} + \frac{x^{\perp}}{\|x^{\perp}\|}\right)\right\|^2\\
\nonumber {}& = \frac 12\left\|\tilde{U}_{-d}^T\cdot \frac{x^{\parallel}}{\|x^{\parallel}\|}\right\|^2 + \frac 12\inprod{\tilde{u}^d, \frac{x^{\parallel}}{\|x^{\parallel}\|} + \frac{x^{\perp}}{\|x^{\perp}\|}}^2\\
\nonumber {}& = \frac 12\left\|\tilde{U}_{-d}^T\cdot \frac{x^{\parallel}}{\|x^{\parallel}\|}\right\|^2 + \frac 12\left(\|x^{\parallel}\| + \|x^{\perp}\|\right)^2\\
\nonumber {}& \ge \frac 12(1-C_2\eps)^2\frac mn + \frac 12\left(\sqrt{C_4\frac dn} + \left((1-C_6\eps)^2\frac mn - C_4\frac dn\right)^{1/2}\right)^2 \\
{}& \ge \frac 12(1-C_2\eps)^2\frac mn + \frac 12\left(\sqrt{C_4\frac dn} + \left((1-C_7\eps)^2\frac mn\right)^{1/2}\right)^2 \EquationName{small-m}\\
{}&\ge (1 - C_8\eps)\frac mn + C_9\frac{\sqrt{md}}n\EquationName{finishcase2}
\end{align}
where \Equation{small-m} used that $m > cd/\eps$. Now note that for $m < cd/\eps^2$, the right hand side of \Equation{finishcase2} is at least $(1+10(C_2+1)\eps)^2m/n$ and thus $\|\tilde{U}^T x_1\| \ge (1+10(C_2+1)\eps)\sqrt{m/n}$.
\end{proof}

\section{Sparsity Lower Bound}
In this section, we consider the trade-off between $m$, the number of columns of the embedding matrix $\SE$, and $s$, the number of non-zeroes per column of $\SE$. In this section, we only consider the case $n\ge 100d^2$. By Yao's minimax principle, we only need to argue about the performance of a fixed matrix $\SE$ over a distribution over $U$. Let the distribution of the columns of $U$ be $d$ i.i.d.\ random standard basis vectors in $\R^n$. With probability at least $99/100$, the columns of $U$ are distinct and form a valid orthonormal basis for a $d$ dimensional subspace of $\R^n$. If $\SE$ succeeds on this distribution of $U$ conditioned on the fact that the columns of $U$ are orthonormal with probability at least $99/100$, then it succeeds in the original distribution with probability at least $98/100$. In section~\ref{sec:lb-eps}, we show a lower bound on $s$ in terms of $\eps$, whenever the number of columns $m$ is much smaller than $\eps^2 d^2$. In section~\ref{sec:lb-m}, we show a lower bound on $s$ in terms of $m$, for a fixed $\eps=1/2$. Finally, in section~\ref{sec:lb-both}, we show a lower bound on $s$ in terms of both $\eps$ and $m$, when they are both sufficiently small.
\subsection{Lower bound in terms of $\eps$}\label{sec:lb-eps}
\begin{theorem}
If $n\ge 100d^2$ and $m \le \eps^2 d(d-1)/32$, then $s = \Omega(1/\eps)$.
\end{theorem}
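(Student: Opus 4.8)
The plan is to combine Yao's minimax principle with a bucketing of the columns of $\SE$ by their heaviest coordinate, the elementary identity $\E\inprod{c}{c'}=\|\E c\|^{2}\ge 0$ for two i.i.d.\ copies of a vector-valued random variable, and a nonuniform balls-and-bins estimate. First I would set up the hard instance exactly as in the paragraph preceding the theorem: fix a single $\SE\in\R^{m\times n}$ (Yao), take the columns of $U$ to be $d$ i.i.d.\ uniformly random standard basis vectors $e_{j_{1}},\dots,e_{j_{d}}$, so that $\SE U$ is the submatrix of $\SE$ on columns $j_{1},\dots,j_{d}$. Writing $c_{j}$ for the $j$-th column of $\SE$, the embedding can succeed only if the Gram matrix $G$, $G_{ab}=\inprod{c_{j_{a}}}{c_{j_{b}}}$, satisfies $\|G-I\|\le 3\eps$; in particular every chosen column has norm in $[1-\eps,1+\eps]$ and every chosen pair has $|\inprod{c_{j_{a}}}{c_{j_{b}}}|\le 3\eps$. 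The goal is to show that when $m\le\eps^{2}d(d-1)/32$ and $s=o(1/\eps)$, one of these conditions is violated with probability exceeding the (constant) failure probability $\delta$. Since a random $d$-subset avoids columns whose norm lies outside $[1-\eps,1+\eps]$ with probability only about $(1-\beta)^{d}$, where $\beta$ is the fraction of such ``bad-norm'' columns, the success guarantee already forces $\beta=O(\delta/d)$, so I may assume all but a negligible fraction of columns are \emph{good}, with norm in $[1-\eps,1+\eps]$.

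Next I would bucket. Each good column $c_{j}$ is $s$-sparse of norm $\ge 1-\eps$, hence has a coordinate $h(j)$ with $|(c_{j})_{h(j)}|\ge(1-\eps)/\sqrt{s}$; group the good columns by $(h(j),\sgn((c_{j})_{h(j)}))$ and, for each coordinate, keep only its larger sign-bucket, obtaining buckets $B_{1},\dots,B_{m'}$ with $m'\le m$, $\sum_{i}|B_{i}|\ge(1-o(1))n/2$, and every column of $B_{i}$ having the same-signed $i$-th coordinate of magnitude $\ge(1-\eps)/\sqrt{s}$. Applying the non-negative-correlation fact inside a bucket, with $c$ uniform over $B_{i}$, gives $\E_{a,b\in B_{i}}\inprod{c_{a}}{c_{b}}=\|\E_{a}c_{a}\|^{2}\ge((\E_{a}c_{a})_{i})^{2}\ge(1-\eps)^{2}/s$. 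Subtracting the diagonal (an $O(1/|B_{i}|)$ error), and averaging over all buckets weighted by their pair counts — here $m\ll n$ keeps the correction $O(\eps^{2})$ — the mean of $\inprod{c_{a}}{c_{b}}$ over all distinct same-bucket pairs exceeds $(1-\eps)^{2}/s-O(\eps^{2})$, which is $>3\eps$ as soon as $s\le c_{0}/\eps$ for a small absolute constant $c_{0}$. Since each such inner product is at most $(1+\eps)^{2}$, an averaging argument then shows a $\gtrsim\eps$-fraction of all same-bucket pairs are \emph{bad}, i.e.\ $\inprod{c_{a}}{c_{b}}>3\eps$; let $H$ be this set of bad pairs, so $|H|\gtrsim\eps\sum_{i}\binom{|B_{i}|}{2}\gtrsim\eps n^{2}/m$ by power-mean.

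Finally, a random choice $\{j_{1},\dots,j_{d}\}$ is exactly $d$ balls thrown into the $m'\le m$ buckets (plus an ``outside'' bucket for discarded columns), and the embedding fails whenever the chosen set contains an edge of $H$. Because $m\le\eps^{2}d(d-1)/32$ is far below $d^{2}$, the nonuniform balls-and-bins estimate makes collisions abundant: $\E[\#\{\text{same-bucket pairs in }J\}]\gtrsim d^{2}/m$, and more to the point $\E[\#\{\text{bad pairs in }J\}]=|H|\cdot\tfrac{d(d-1)}{n(n-1)}\gtrsim\eps d^{2}/m\gtrsim 1/\eps$. Upgrading this first-moment statement to $\Pr_{J}[J\text{ contains a bad pair}]\ge\Omega(1)>\delta$ is the technical heart of the proof, and the place where the precise balls-and-bins bound is needed: a second-moment (Paley--Zygmund, or Janson) computation on the number of bad pairs in $J$ succeeds provided no bucket contains a column that is badly correlated with unboundedly many others. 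Controlling this within-bucket degree is the main obstacle; I would bound it by $\sum_{b\in B_{i}}\inprod{c_{a}}{c_{b}}^{2}\le\|c_{a}\|^{2}\,\bigl\|\sum_{b\in B_{i}}c_{b}c_{b}^{\top}\bigr\|$ and note that whenever $\bigl\|\sum_{b\in B_{i}}c_{b}c_{b}^{\top}\bigr\|$ is large the bucket is already forced to contribute proportionally many bad pairs (its columns being nearly aligned), so the variance remains controlled; a short case analysis on whether the buckets are balanced handles the remaining possibility. Combining, $\Pr_{J}[\SE U\text{ fails}]>\delta$, contradicting the assumed OSE guarantee, so $s=\Omega(1/\eps)$.
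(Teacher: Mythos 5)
Your setup, the bucketing of columns by heaviest coordinate and sign, and the use of $\E\inprod{c,c'}=\|\E c\|^2\ge 0$ to force a large average same-bucket inner product all match the paper's strategy, and your deduction that a $\gtrsim\eps$-fraction of same-bucket pairs must satisfy $\inprod{c_a,c_b}>3\eps$ when $s=o(1/\eps)$ is sound. The gap is exactly where you place it: upgrading $\E[\#\{\text{bad pairs in }J\}]\gtrsim 1/\eps$ to $\Pr[J\text{ contains a bad pair}]=\Omega(1)$. This is not a routine second-moment step, because a graph $H$ on $[n]$ with $|H|\gtrsim \eps n^2/m\gtrsim n^2/(\eps d^2)$ edges can genuinely be missed by a random $d$-subset with probability $0.99$: take $H=K_{A,B}$ with $|A|\approx n/(100d)$ and $B$ the remaining vertices; then $|H|\approx n^2/(100d)$, which exceeds $n^2/(\eps d^2)$ once $d\gtrsim 1/\eps$, yet $J$ meets an edge of $H$ only if it hits $A$, which happens with probability about $1/100$. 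So the size of $H$ alone cannot close the argument; you need structural control of $H$ that your spectral sketch does not yet supply (a large value of $\|\sum_b c_bc_b^T\|$ only yields many pairs with large $|\inprod{c_a,c_b}|$, not with $\inprod{c_a,c_b}>3\eps$, and says nothing about where those pairs sit in the bucket).

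The paper closes this by reversing the order of quantification, and that is the idea you are missing. It first shows, by a variance bound on the collision count $C$ of the balls-and-bins process alone (no inner products involved), that $J$ contains $C\ge 4\eps^{-2}$ same-bucket, same-sign collisions with probability $1-O(\eps)$, and then extracts $\sqrt{C}/2\ge 1/\eps$ \emph{pairwise-disjoint} colliding pairs. Only then does it invoke the correlation lemmas: for each disjoint pair the truncated columns $v_a,v_b$ are independent samples, so $\Pr[\inprod{v_a,v_b}\le-\eps]\le 1/(1+\eps)$, and disjointness makes these events independent across the $1/\eps$ pairs; hence with probability at least $1-(1+\eps)^{-1/\eps}\ge 1-e^{\eps/2-1}$ some pair has $\inprod{v_a,v_b}>-\eps$, giving $\inprod{u_a,u_b}\ge(1-\eps)^2s^{-1}-\eps$ and contradicting $\|\SE U\|\le 1+\eps$ on the test vector $(e_a+e_b)/\sqrt2$ unless $s\gtrsim 1/\eps$. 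This is also where the hypothesis $m\le\eps^2d(d-1)/32$ is really spent: one needs $\eps^{-2}$ collisions, not $\eps^{-1}$, precisely to survive the lossy extraction of disjoint pairs. If you restructure your argument around disjoint pairs and per-pair independence rather than a fixed bad-pair graph, the second-moment difficulty disappears.
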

\begin{proof}
We first need a few simple lemmas.
\begin{lemma}\label{lem:nonnegative-dot-product}
Let $\mathcal{P}$ be a distribution over vectors of norm at most 1 and $u$ and $v$ be independent samples from $\mathcal{P}$. Then
$\E\inprod{u,v} \ge 0$.
\end{lemma}
\begin{proof}
Let $\delta=\E\inprod{u,v}$. Assume for the sake of contradiction that $\delta < 0$. Take $t$ samples $u_1, \ldots, u_t$ from $\mathcal{P}$. By linearity of expectation, we have $0 \le \E(\sum_i u_i)^2 \le t + t(t-1)\delta$. This is a contradiction because the RHS tends to $-\infty$ as $t\rightarrow \infty$.
\end{proof}
\begin{lemma}\label{lem:not-negative-often}
Let $X$ be a random variable bounded by $1$ and $\E X \ge 0$. Then for any $0<\delta<1$, we have $\Pr(X \le -\delta) \le 1/(1+\delta)$.
\end{lemma}
\begin{proof}
We prove the contrapositive. If $\Pr(X\le-\delta) > 1/(1+\delta)$, then 
$$\E X \le -\delta\Pr(X\le-\delta) + \Pr(X>-\delta) < -\delta/(1+\delta) + 1 - 1/(1+\delta) = 0 .$$
\end{proof}

Let $u_i$ be the $i$ column of $\SE U$, $r_i$ and $z_i$ be the index and the value of the coordinate of the maximum absolute value of $u_i$, and $v_i$ be $u_i$ with the coordinate at position $r_i$ removed. Let $p_{2j-1}$(respectively, $p_{2j}$) be the fractions columns of $\SE$ whose entry of maximum absolute value is on row $j$ and is positive (respectively, negative). Let $C_{i,j}$ be the indicator variable indicating whether $r_i=r_j$ and $z_i$ and $z_j$ are of the same sign. Let $E=\E C_{1,2} = \sum_{i=1}^{2m} p_i^2$. Let $C=\sum_{i<j\le d}C_{i,j}$. We have 
$$\E C = \frac{d(d-1)}{2}\sum_{i=1}^{2m}p_i^2 \ge \frac{d(d-1)}{4m} \ge 8\eps^{-2}$$
 If $i_1,i_2, i_3,i_4$ are distinct then $C_{i_1, i_2}, C_{i_3, i_4}$ are independent. If the pairs $(i_1, i_2)$ and $(i_3, i_4)$ share one index then $\Pr(C_{i_1,i_2}=1 \wedge C_{i_3,i_4}=1) = \sum_{i}p_i^3$ and $\Pr(C_{i_1,i_2}=1 \wedge C_{i_3,i_4}=0) = \sum_{i}p_i^2(1-p_i)$. Thus for this case,
\begin{align*}
\E(C_{i_1,i_2}-E])(C_{i_3,i_4}-E]) &=(1-2\sum_i p_i^2+\sum_i p_i^3)E^2 - 2(1-E)E\sum_i p_i^2 (1-p_i) + (1-E)^2 \sum_i p_i^3\\
&=E^2 -2E^3+E^2\sum_i p_i^3-(2E-2E^2)(E-\sum_i p_i^3) + (1-2E+E^2)\sum_i p_i^3\\
&=\sum_i p_i^3 - E^2\le \left(\sum_i p_i^2\right)^{3/2}
\end{align*}
The last inequality follows from the fact that the $\ell_3$ norm of a vector is smaller than its $\ell_2$ norm.
We have
$$\Var[C] = \frac{d(d-1)}{2}\Var[C_{1,2}] + d(d-1)(d-2)\E(C_{i_1,i_2}-\E C_{i_1,i_2})(C_{i_1,i_3}-\E C_{i_1,i_3}) \le 4(\E C)^{3/2} .$$
Therefore,
$$\Pr(C\le (\E C)/2) \le \frac{4\Var[C]}{(\E C)^2} \le O\left(\sqrt{\frac{m}{d(d-1)}}\right) .$$
Thus, with probability at least $1-O(\eps)$, we have $C\ge 4\eps^{-2}$. We now argue that there exist $1/\eps$ pairwise-disjoint pairs $(a_i, b_i)$ such that $r_{a_i} = r_{b_i}$ and $z_{a_i}$ and $z_{b_i}$ are of the same sign. Indeed, let $d_{2j-1}$ (respectively, $d_{2j}$) be the number of $u_i$'s with $r_i=j$ and $z_i$ being positive (respectively, negative). Wlog, assume that $d_1, \ldots, d_t$ are all the $d_i$'s that are at least 2. We can always get at least $\sum_{i=1}^t (d_i-1)/2$ disjoint pairs. We have
$$\sum_{i=1}^t (d_i-1)/2 \ge \frac{1}{2}\left(\sum_{i=1}^t d_i (d_i-1)/2\right)^{1/2} =\frac{C^{1/2}}{2} \ge \eps^{-1}$$

For each pair $(a_i, b_i)$, by Lemmas~\ref{lem:nonnegative-dot-product} and~\ref{lem:not-negative-often}, $\Pr[\langle v_{a_i},v_{b_i}\rangle \le -\eps] \le \frac{1}{1+\eps}$ and these events for different $i$'s are independent so with probability at least $1-(1+\eps)^{-1/\eps}\ge 1 - e^{\eps/2-1}$, there exists some $i$ such that $\langle v_{a_i},v_{b_i}\rangle > -\eps$. For $\SE$ to be a subspace embedding for the column span of $U$, it must be the case, for all $i$, that $\|u_i\| = \|\SE Ue_i\| \ge 1-\eps$. We have $|z_i|\ge s^{-1/2}\|u_i\|\ge s^{-1/2}(1-\eps)~\forall i$. Therefore, $\langle u_{a_i}, u_{b_i}\rangle \ge s^{-1}(1-\eps)^2-\eps$. We have
\begin{align*}
\left\|\SE U\left(\frac{1}{\sqrt{2}}(e_{a_i}+e_{b_i})\right)\right\|^2 &= \frac{1}{2}\|u_{a_i}\|^2 + \frac{1}{2}\|u_{b_i}\|^2 + \langle u_{a_i}, u_{b_i}\rangle\\
&\ge (1-\eps)^2(1 + s^{-1})-\eps
\end{align*}
However, $\|\SE U\|\le 1+\eps$ so $s \ge (1-\eps)^2/(5\eps)$.
\end{proof}
\subsection{Lower bound in terms of $m$}\label{sec:lb-m}
\begin{theorem}
For $n\ge 100d^2$, $\frac{20\log\log d}{\log d}<\gamma<1/12$ and $\eps=1/2$, if $m \le d^{1+\gamma}$, then $s=\Omega(1/\gamma)$.
\end{theorem}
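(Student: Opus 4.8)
The plan is to invoke Yao's minimax principle with the distribution on $U$ fixed at the start of the section: the columns of $U$ are $d$ i.i.d.\ uniformly random standard basis vectors $e_{j_1},\dots,e_{j_d}$, which are distinct --- hence a valid o.n.\ basis --- with probability at least $99/100$ since $n\ge 100d^2$, and $\SE\in\R^{m\times n}$ is a fixed matrix with at most $s$ nonzeros per column. The point is that $\SE Ue_i=\SE e_{j_i}$ is a uniformly random column of $\SE$. For nonzero $v$ let $\mathrm{bin}(v)=(r,\sgn v_r)$ where $r$ is the index of a largest-magnitude entry of $v$ (ties broken by index); there are at most $2m$ bins. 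Set $k=\lfloor 1/(4\gamma)\rfloor$, which is at least $3$ and $\Theta(1/\gamma)$ because $\gamma<1/12$. I will show (a) that a $k$-wise collision of the columns $\SE e_{j_i}$ in one bin forces $s\ge k/9$, and (b) that such a collision occurs with probability bounded away from $0$.

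For (a): suppose a realization of $U$ has distinct $j_i$ and satisfies both (i) $\SE$ is a $(1+\eps)$-subspace embedding for the column span of $U$, and (ii) there is $S\subseteq[d]$, $|S|=k$, such that all the columns $\SE e_{j_i}$, $i\in S$, lie in one bin, say (without loss of generality) $(r,+)$. Put $x=k^{-1/2}\sum_{i\in S}e_{j_i}$, a unit vector by distinctness. Since $\|\SE e_{j_i}\|=\|\SE Ue_i\|\ge 1-\eps$ and $\SE e_{j_i}$ has at most $s$ nonzeros, its largest entry $z_i=(\SE e_{j_i})_r$ obeys $z_i\ge(1-\eps)/\sqrt s$, so $(\SE x)_r=k^{-1/2}\sum_{i\in S}z_i\ge\sqrt{k/s}\,(1-\eps)$ and hence $\|\SE x\|\ge\sqrt{k/s}\,(1-\eps)$. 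But (i) forces $\|\SE x\|\le 1+\eps$, so $s\ge k(1-\eps)^2/(1+\eps)^2=k/9$ since $\eps=1/2$. Therefore it suffices to prove (ii) holds with probability at least, say, $2/100$: then (i) and (ii) co-occur for some $U$ in the support, and $s\ge k/9=\Omega(1/\gamma)$ follows.

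For (b): let $p_1,\dots,p_{2m}$ be the fractions of the $n$ columns of $\SE$ lying in the respective bins, let $L_\ell$ count the $i\in[d]$ with $\mathrm{bin}(\SE e_{j_i})=\ell$, and let $C=\sum_\ell\binom{L_\ell}{k}$ count monochromatic $k$-tuples (the $k$-ary analogue of the quantity $C$ used in the $\eps$-lower bound). By linearity $\E C=\binom dk\sum_\ell p_\ell^{k}\ge\binom dk(2m)^{1-k}$, because $\sum_\ell p_\ell^{k}$ subject to $\sum_\ell p_\ell=1$ over $2m$ bins is minimized by the uniform distribution (convexity of $t\mapsto t^{k}$). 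Using $2m\le 2d^{1+\gamma}$ and $k\le 1/(4\gamma)$ (so $\gamma(k-1)<1/4$), this gives $\E C\ge(2k)^{-k}d^{\,1-\gamma(k-1)}\ge(2k)^{-k}d^{3/4}$; and the hypothesis $\gamma>20\log\log d/\log d$ forces $k\le\log d/(80\log\log d)$, hence $k\log(2k)\le\tfrac1{80}\log d$, so $\E C\ge d^{3/4-1/80}\to\infty$ --- in fact polynomially in $d$, which leaves room for a variance estimate. I would then show $C\ge 1$ with probability $1-o(1)$. The cleanest route is to reduce to the extremal case --- among all distributions $p$ over the $2m$ bins, the uniform one minimizes the maximum load, and so makes the event $C\ge 1$ hardest --- and then apply the textbook concentration of the maximum load of $d$ balls thrown into $2m$ uniform bins: the number $\sum_\ell\mathbf 1[L_\ell\ge k]$ of $k$-loaded bins is a sum of negatively associated indicators whose expectation is of the same order as $\E C\to\infty$, so by Chebyshev (using variance at most mean) it is positive with probability $1-o(1)$. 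Alternatively one works directly with $C$ and bounds $\E[C^2]$ by expanding over ordered pairs of $k$-subsets of $[d]$ grouped by overlap size; the constraints on $\gamma$ are exactly what make every off-diagonal contribution $o((\E C)^2)$, so $C$ is concentrated by Paley--Zygmund.

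The main obstacle is step (b) at the value $k\asymp 1/\gamma$. Pair-counting ($k=2$), which sufficed for the $\eps$-lower bound, here gives only $s\ge k/9$ with $k=2$, i.e.\ nothing, so one genuinely needs a $k$-wise collision with $k$ growing like $1/\gamma$; lower-bounding $\E C$ is what drives the hypothesis $\gamma>20\log\log d/\log d$ (it is, up to the constant $20$, the threshold above which $(2k)^{-k}d^{1-\gamma(k-1)}\to\infty$ for $k\asymp 1/\gamma$). The rest --- the negative-association/second-moment estimate making the collision appear, the reduction ``uniform is the adversary's best choice,'' the conditioning on the $j_i$ being distinct, the Yao reduction to a fixed $\SE$, and the role of $\eps=1/2$ in turning $\sqrt{k/s}\,(1-\eps)\le 1+\eps$ into $s\ge k/9$ --- is routine.
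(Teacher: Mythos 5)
Your overall architecture is the same as the paper's: by Yao, fix $\SE$ and draw the columns of $U$ as i.i.d.\ standard basis vectors; bucket the columns of $\SE$ by the row of their largest-magnitude entry; show that $\Theta(1/\gamma)$ of the $d$ selected columns must collide in one bucket; and convert that collision into many entries of size $\gtrsim s^{-1/2}$ in a single row of $\SE U$, contradicting $\|\SE U\|\le 1+\eps$. Your step (a) is correct and is a legitimate variant of the paper's finish (the paper bounds the row norm $\|e_r^T\SE U\|^2\ge t/(4s)$, which does not even require tracking signs; your test-vector version with $2m$ signed bins also works and only costs a factor $2$).

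The gap is in step (b), which you correctly identify as the crux but then declare routine. Your first route rests on the claim that among all distributions over the $2m$ bins the uniform one minimizes $\Pr(\max_\ell L_\ell\ge k)$, so that the uniform case is the hardest. That is a Schur-convexity/majorization statement about the occupancy problem; it is plausible (and true for $k=2$, the classical birthday fact that $\sum_\ell p_\ell^2$ and indeed the collision probability are minimized at uniform), but for general $k$ it is a genuinely nontrivial extremal lemma that you assert without proof --- it is not the ``textbook concentration'' part, which only covers the uniform case. Your second route, the direct second moment on $C=\sum_\ell\binom{L_\ell}{k}$, does not close with the estimates you would naturally reach for: writing $Q=\sum_\ell p_\ell^k$, the overlap-$j$ term of $\E C^2$ is $\binom dk\binom kj\binom{d-k}{k-j}\sum_\ell p_\ell^{2k-j}$, and bounding $\sum_\ell p_\ell^{2k-j}\le Q^{(2k-j)/k}$ together with the worst-case bound $Q\ge(2m)^{1-k}$ gives a ratio to $(\E C)^2$ of order $(k^2 d^{\gamma})^j$, which diverges; so it is not true that ``every off-diagonal contribution is $o((\E C)^2)$'' uniformly over $p$ without further case analysis (e.g.\ on $\max_\ell p_\ell$). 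The paper avoids both difficulties with a short, fully elementary device you are missing: the alias-method reduction (its Lemma~11). One repartitions the probability mass into $m$ \emph{virtual} bins of mass exactly $1/m$, each drawing from at most two real bins; the uniform analysis (its Lemma~10, a first-moment plus Chebyshev computation under negative correlation) then produces a virtual bin with load $\ge\alpha/(2\gamma)$, and hence a real bin receiving at least half of those balls, i.e.\ a group of size $\alpha/(4\gamma)$. Replacing your ``uniform is the adversary's best choice'' assertion with this reduction would make your argument complete.
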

\begin{proof}
We first prove a standard bound for a certain balls and bins problem. The proof is included for completeness.
\begin{lemma}\label{lem:uniform-balls-bins}
Let $\alpha$ be a constant in $(0,1)$.
Consider the problem of throwing $d$ balls independently and uniformly at random at $m\le d^{1+\gamma}$ bins with $\frac{10\log\log d}{\alpha\log d}<\gamma<1/12$. With probability at least $99/100$, at least $d^{1-\alpha}/2$ bins have load at least $\alpha/(2\gamma)$.
\end{lemma}
\begin{proof}
Let $X_i$ be the indicator r.v.\ for bin $i$ having $t=\alpha/(2\gamma)$ balls, and $X\eqdef \sum_i X_i$. Then
$$\E X_1 = {d\choose t}m^{-t}(1-1/m)^{d-t} \ge \left(\frac{d}{tm}\right)^t e^{-1} \ge d^{-\alpha}$$
Thus, $\E X \ge d^{1-\alpha}$.
Because $X_i$'s are negatively correlated,
$$\Var[X] \le \sum_i\Var[X_i] = n(\E X_1 - (\E X_1)^2) \le \E X .$$
By Chebyshev's inequality,
$$\Pr[X \le d^{1-\alpha}/2] \le \frac{4\Var[X]}{(\E X)^2} \le 4d^{\alpha-1}$$
Thus, with probability $1-4d^{\alpha-1}$, there exist $d^{1-\alpha}/2$ bins with at least $\alpha/(2\gamma)$ balls.
\end{proof}
Next we prove a slightly weaker bound for the non-uniform version of the problem.
\begin{lemma}\label{lem:nonuniform-balls-bins}
Consider the problem of throwing $d$ balls independently at $m\le d^{1+\gamma}$ bins. In each throw, bin $i$ receives the ball with probability $p_i$. With probability at least $99/100$, there exist $d^{1-\alpha}/2$ disjoint groups of balls of size $\alpha/(4\gamma)$ each such that all balls in the same group land in the same bin.
\end{lemma}
\begin{proof}
The following procedure is inspired by the alias method, a constant time algorithm for sampling from a given discrete distribution (see e.g.~\cite{Vose91}). We define a set of $m$ virtual bins with equal probabilities of receiving a ball as follows. The following invariant is maintained: in the $i$th step, there are $m-i+1$ values $p_1,\ldots, p_{m-i+1}$ satisfying $\sum_j p_j = (m-i+1)/m$. In the $i$th step, we create the $i$th virtual bin as follows. Pick the smallest $p_j$ and the largest $p_k$. Notice that $p_j \le 1/m \le p_k$. Form a new virtual bin from $p_j$ and $1/m-p_j$ probability mass from $p_k$. Remove $p_j$ from the collection and replace $p_k$ with $p_k+p_j-1/m$.

By Lemma~\ref{lem:uniform-balls-bins}, there exist $d^{1-\alpha}/2$ virtual bins receiving at least $\alpha/(2\gamma)$ balls. Since each virtual bin receives probability mass from at most 2 bins, there exist $d^{1-\alpha}/2$ groups of balls of size at least $\alpha/(4\gamma)$ such that all balls in the same group land in the same bin.
\end{proof}

Finally we use the above bound for balls and bins to prove the lower bound. Let $p_i$ be the fraction of columns of $\SE$ whose coordinate of largest absolute value is on row $i$. By Lemma~\ref{lem:nonuniform-balls-bins}, there exist a row $i$ and $\alpha/(4\gamma)$ columns of $\SE U$ such that the coordinates of maximum absolute value of those columns all lie on row $i$. $\SE$ is a subspace embedding for the column span of $U$ only if $\|\SE Ue_j\|\in [1/2, 3/2]~\forall j$. The columns of $\SE U$ are $s$ sparse so for any column of $\SE U$, the largest absolute value of its coordinates is at least $s^{-1/2}/2$. Therefore, $\|e_i^T \SE U\|^2 \ge \alpha/(16\gamma s)$. Because $\|\SE U\|\le 3/2$, it must be the case that $s=\Omega(\alpha/\gamma)$.

\end{proof}

\vspace{-.4in}\subsection{Combining both types of lower bounds}\label{sec:lb-both}

\begin{theorem}
For $n\ge 100d^2$, $m<d^{1+\gamma}$, $\alpha\in(0,1)$, $\frac{10\log\log d}{\alpha\log d}<\gamma<\alpha/4$, $0<\eps<1/2$, and $2/(\eps\gamma) < d^{1-\alpha}$, we must have $s=\Omega(\alpha/(\eps\gamma))$.
\end{theorem}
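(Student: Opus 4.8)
The plan is to combine the two preceding lower bounds by running the argument of the $m$-dependent theorem (Section~\ref{sec:lb-m}), but tracking the constants more carefully so that a factor of $1/\eps$ also appears, exactly as in the $\eps$-dependent theorem (Section~\ref{sec:lb-eps}). As before, invoke Yao's minimax principle with $U$ having $d$ i.i.d.\ random standard basis columns; it suffices to lower bound $s$ for a fixed $\SE$ that embeds the column span of $U$ with good probability over this distribution. Let $p_i$ be the fraction of columns of $\SE$ whose coordinate of largest absolute value is on row $i$. Since $n \ge 100d^2$ and $m < d^{1+\gamma}$ with $\gamma$ in the stated range, \Lemma{nonuniform-balls-bins} applies (its hypothesis $\frac{10\log\log d}{\alpha\log d}<\gamma<1/12$ is implied by $\gamma < \alpha/4 \le 1/4$ once we also note $\alpha<1$; if $\gamma\ge 1/12$ is a concern one restricts to the regime where the lemma was stated, but the hypotheses here give $\gamma<\alpha/4<1/4$, so a minor re-inspection of \Lemma{uniform-balls-bins} with the weaker upper bound $\gamma<1/4$ suffices). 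This yields, with probability $\ge 99/100$, a row $i$ and a group $S$ of $t \eqdef \alpha/(4\gamma)$ columns of $\SE U$ whose coordinate of maximum absolute value all lies on row~$i$, and moreover — crucially for getting the $\eps$ dependence — we may further insist via the counting in \Lemma{nonuniform-balls-bins} that we in fact get many such disjoint groups, so that we can afford to pass to a sub-collection on which a sign pattern is consistent.

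Now for each column $u_j = \SE U e_j$ in the group $S$, write $r$ for the common row index $i$, let $z_j$ be the $r$-th coordinate of $u_j$, and let $v_j$ be $u_j$ with coordinate $r$ deleted. By the embedding guarantee $\|u_j\|\ge 1-\eps$, and since $u_j$ is $s$-sparse the maximum coordinate satisfies $|z_j| \ge s^{-1/2}\|u_j\| \ge s^{-1/2}(1-\eps)$. Partition $S$ into those $j$ with $z_j>0$ and those with $z_j<0$; the larger part $S'$ has size $\ge t/2 = \alpha/(8\gamma)$, and on $S'$ all the $z_j$ share a sign, so $z_{j}z_{k} \ge s^{-1}(1-\eps)^2$ for all $j,k\in S'$. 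Next apply Lemmas~\ref{lem:nonnegative-dot-product} and~\ref{lem:not-negative-often} to the distribution of the deleted-coordinate vectors $v_j$ (normalized to have norm $\le 1$, which they do since $\|u_j\|\le 1+\eps$ can be rescaled, or simply work with $\|v_j\|\le\|u_j\|$ and absorb a $(1+\eps)^2$): pairing up the $|S'|/2 \ge \alpha/(16\gamma)$ columns into disjoint pairs, each pair has $\langle v_{a},v_{b}\rangle > -\eps$ with probability $\ge 1 - 1/(1+\eps)$, independently — wait, these are not independent since the $v_j$ are functions of the fixed $\SE$ and of the (now fixed) columns of $U$; instead, as in Section~\ref{sec:lb-eps}, observe that over the randomness of which standard basis vector each column of $U$ is, the relevant inner products behave like i.i.d.\ samples, so with probability $\ge 1 - (1/(1+\eps))^{\alpha/(16\gamma)} \ge 1 - e^{-\Omega(\eps\alpha/\gamma)}$ there is a pair $(a,b)$ with $\langle v_a, v_b\rangle > -\eps$. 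Since $2/(\eps\gamma) < d^{1-\alpha}$, there are enough disjoint groups produced by \Lemma{nonuniform-balls-bins} to make this failure probability $o(1)$.

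Fixing such a pair $(a,b)$, we then have $\langle u_a, u_b\rangle = z_a z_b + \langle v_a, v_b\rangle \ge s^{-1}(1-\eps)^2 - \eps$, so
\[
\left\|\SE U\left(\tfrac{1}{\sqrt 2}(e_a+e_b)\right)\right\|^2 = \tfrac12\|u_a\|^2 + \tfrac12\|u_b\|^2 + \langle u_a,u_b\rangle \ge (1-\eps)^2\bigl(1 + s^{-1}\bigr) - \eps,
\]
while the embedding property forces this to be $\le (1+\eps)^2$. Rearranging gives $s^{-1}(1-\eps)^2 \le (1+\eps)^2 - (1-\eps)^2 + \eps = O(\eps)$, i.e.\ $s = \Omega(1/\eps)$ — but this only recovers the Section~\ref{sec:lb-eps} bound. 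To also get the $\alpha/\gamma$ factor, we must instead use the \emph{entire} group rather than a single pair: taking the balanced signed sub-group $S'$ of size $k \eqdef \alpha/(8\gamma)$ and the unit vector $x = k^{-1/2}\sum_{j\in S'} e_j$, we get $\|\SE U x\|^2 = \frac1k\sum_{j}\|u_j\|^2 + \frac1k\sum_{j\ne k'\in S'}\langle u_j,u_{k'}\rangle \ge (1-\eps)^2 + (k-1)\bigl(s^{-1}(1-\eps)^2 - \text{(avg.\ off-row term)}\bigr)$, and here the point is that by a union-bound version of Lemmas~\ref{lem:nonnegative-dot-product}--\ref{lem:not-negative-often} the average of the $\binom{k}{2}$ terms $\langle v_j,v_{k'}\rangle$ is $\ge -\eps$ with good probability, so $\|\SE U x\|^2 \ge (1-\eps)^2 + (k-1)(s^{-1}(1-\eps)^2 - \eps) \le (1+\eps)^2$ forces $(k-1)s^{-1}(1-\eps)^2 \le (1+\eps)^2 - (1-\eps)^2 + (k-1)\eps = O(k\eps)$, whence $s = \Omega(k) = \Omega(\alpha/(\gamma\eps))$ after dividing by $\eps$.

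The main obstacle is the step controlling the $\binom{k}{2}$ off-row inner products $\langle v_j, v_{k'}\rangle$ simultaneously: Lemma~\ref{lem:not-negative-often} only controls a \emph{single} inner product, and a naive union bound over $\binom{k}{2}$ pairs would blow up the failure probability. The resolution, and the technical heart of the argument, is to realize — exactly as in the $\ell_3\le\ell_2$ second-moment computation in Section~\ref{sec:lb-eps} — that we need only the \emph{sum} $\sum_{j\ne k'}\langle v_j, v_{k'}\rangle = \|\sum_j v_j\|^2 - \sum_j \|v_j\|^2 \ge -\sum_j\|v_j\|^2 \ge -k(1+\eps)^2$ to be not too negative, which holds deterministically; plugging this deterministic bound in place of the per-pair bound, we get $\|\SE U x\|^2 \ge (1-\eps)^2 + (k-1)s^{-1}(1-\eps)^2 - (k-1)(1+\eps)^2/ \cdots$, which is too weak. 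So the correct fix is the probabilistic one: $\E[\sum_{j\ne k'}\langle v_j,v_{k'}\rangle] \ge 0$ by Lemma~\ref{lem:nonnegative-dot-product} applied to the empirical distribution over $\{v_j\}$ treated via the randomness of $U$, and then Lemma~\ref{lem:not-negative-often} (or Markov) applied to the single random variable $-\frac{1}{\binom k2}\sum_{j\ne k'}\langle v_j,v_{k'}\rangle$ gives that it is $\le \eps$ with probability bounded away from~$1$ — and using the many disjoint candidate groups from \Lemma{nonuniform-balls-bins} (there are $\ge d^{1-\alpha}/2 \gg 2/(\eps\gamma)$ of them) to amplify, at least one group succeeds with probability $1-o(1)$. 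Getting this amplification to interact correctly with the disjointness and with the re-use of the balls-and-bins randomness is the delicate bookkeeping the proof must handle.
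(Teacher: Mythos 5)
Your overall architecture matches the paper's: Yao's principle with i.i.d.\ standard-basis columns, Lemma~\ref{lem:nonuniform-balls-bins} to extract many disjoint groups of $t=\alpha/(4\gamma)$ columns sharing a max-coordinate row (the paper gets sign-consistency for free by taking the bins to be (row, sign) pairs, i.e.\ $2m$ bins, rather than by passing to a majority-sign half as you do — both work), then Lemma~\ref{lem:nonnegative-dot-product} plus linearity applied to the \emph{sum} $\sum_{i\ne j}\inprod{v_i,v_j}$ over a group, Lemma~\ref{lem:not-negative-often} applied to that single normalized random variable, and amplification over the disjoint groups. That is exactly the paper's route, and your identification of the "sum, not union bound over pairs" idea is the right key step.

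However, your final quantitative step has a genuine gap: you apply Lemma~\ref{lem:not-negative-often} with tolerance $\eps$ per pair, i.e.\ $\sum_{j\ne k'}\inprod{v_j,v_{k'}} \ge -k(k-1)\eps$, which gives $\|\SE U x\|^2 \ge (1-\eps)^2 + (k-1)\bigl(s^{-1}(1-\eps)^2-\eps\bigr)$ and hence $(k-1)s^{-1}(1-\eps)^2 \le O(\eps) + (k-1)\eps = O(k\eps)$. This yields only $s=\Omega(1/\eps)$; your claim that it gives $s=\Omega(k)=\Omega(\alpha/(\gamma\eps))$ "after dividing by $\eps$" does not follow — the $(k-1)$ cancels on both sides. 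The missing idea is to shrink the tolerance in Lemma~\ref{lem:not-negative-often} to $\eps\gamma$ per pair (as the paper does), so that the total error term is $(t-1)\cdot\eps\gamma \le \alpha\eps/4 = O(\eps)$ and one gets $(t-1)s^{-1}(1-\eps)^2 \le O(\eps)$, i.e.\ $s=\Omega(\alpha/(\eps\gamma))$. The price is that the per-group success probability drops from $\eps/(1+\eps)$ to $\eps\gamma/(1+\eps\gamma)$, so you need $\Theta(1/(\eps\gamma))$ independent groups to amplify — which is precisely why the hypothesis $2/(\eps\gamma) < d^{1-\alpha}$ appears. You cite this hypothesis but never actually use it in the way that makes the bound come out; with your choice of tolerance the extra groups buy nothing. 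The fix is local, but as written the proof does not establish the stated bound.
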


\begin{proof}
Let $u_i$ be the $i$ column of $\SE U$, $r_i$ and $z_i$ be the index and the value of the coordinate of the maximum absolute value of $u_i$, and $v_i$ be $u_i$ with the coordinate at position $r_i$ removed. Fix $t=\alpha/(4\gamma)$. Let $p_{2i-1}$ (respectively, $p_{2i}$) be the fractions of columns of $\SE$ whose largest entry is on row $i$ and positive (respectively, negative). By Lemma~\ref{lem:nonuniform-balls-bins}, there exist $d^{1-\alpha}/2$ disjoint groups of $t$ columns of $\SE U$ such that the columns in the same group have the entries with maximum absolute values on the same row.  Consider one such group $G=\{u_{i_1}, \ldots, u_{i_t}\}$. By Lemma~\ref{lem:nonnegative-dot-product} and linearity of expectation, $\E \sum_{u_i,u_j\in G, i\ne j}\inprod{v_i, v_j} \ge 0$. Furthermore, $\sum_{u_i,u_j\in G, i\ne j} \langle v_i, v_j \rangle\le t(t-1)$. Thus, by Lemma~\ref{lem:not-negative-often}, $\Pr(\sum_{u_i, u_j\in G, i\ne j} \inprod{v_i, v_j} \le -t(t-1)(\eps\gamma)) \le \frac{1}{1+\eps\gamma}$. This event happens independently for different groups, so with probability at least $1-(1+\eps\gamma)^{-1/(\eps\gamma)} \ge 1-e^{\eps\gamma/2-1}$, there exists a group $G$ such that
$$\sum_{u_i, u_j\in G, i\ne j} \langle v_i, v_j \rangle > -t(t-1)(\eps\gamma)$$
The matrix $\SE$ is a subspace embedding for the column span of $U$ only if for all $i$, we have $\|u_i\|=|\SE Ue_i\|\ge (1-\eps)$. We have $|z_i|\ge s^{-1/2}\|u_i\| \ge s^{-1/2}(1-\eps)$. Thus, $\sum_{u_i, u_j\in G, i\ne j}\langle u_i, u_j \rangle \ge t(t-1)((1-\eps)^2 s^{-1} - \eps\gamma)$. We have 
\begin{align*}
\left\|\SE U\left(\frac{1}{\sqrt{t}}\left(\sum_{i:u_i\in G} e_i\right)\right)\right\|^2 \ge (1-\eps)^2 + \frac{2}{t}{t\choose 2} ((1-\eps)^2 s^{-1}-\eps\gamma)\ge (1-\eps)^2(1+(t-1)s^{-1})-\alpha\eps/4
\end{align*}
Because $\|\SE U\|\le 1+\eps$, we must have $s \ge \frac{(\alpha/\gamma-4)(1-\eps)^2}{(16+\alpha)\eps}$.
\end{proof}

\bibliographystyle{plain}
\bibliography{main}

\end{document}